\DeclareMathOperator{\E}{\mathbb{E}}
\DeclareMathOperator{\rad}{rad}
\DeclareMathOperator{\diam}{diam}
\renewcommand{\Pr}[2][]{\mathbb{P}_{#1} \left\{ #2 \rule{0mm}{3mm}\right\}}
\newcommand{\ip}[2]{\left\langle#1,#2\right\rangle}
\def \DD {\mathcal{D}}
\def \EE {\mathcal{E}}
\def \NN {\mathcal{N}}
\def \TT {\mathcal{T}}
\def \SS {\mathcal{S}}
\def \KK {\mathcal{K}}
\def \B {\mathbb{B}}
\def \R {\mathbb{R}}
\def \S {\mathbb{S}}
\def \va {\bm{a}}
\def \vb {\bm{b}}
\def \ve {\bm{e}}
\def \vg {\bm{g}}
\def \vh {\bm{h}}
\def \vn {\bm{n}}
\def \vu {\bm{u}}
\def \vv {\bm{v}}
\def \vw {\bm{w}}
\def \vx {\bm{x}}
\def \vy {\bm{y}}
\def \vz {\bm{z}}
\def \vzero {\bm{0}}
\def \mA {\bm{A}}
\def \mI {\bm{I}}
\def \mP {\bm{P}}
\def \mPhi {\bm{\Phi}}
\newtheorem{theorem}{Theorem}
\newtheorem{lemma}{Lemma}
\newtheorem{fact}{Fact}
\theoremstyle{definition}
\newtheorem{definition}{Definition}
\theoremstyle{remark}
\newtheorem{remark}{Remark}
\newtheorem*{rep@theorem}{\rep@title}
\newcommand{\newreptheorem}[2]{%
	\newenvironment{rep#1}[1]{%
		\def\rep@title{#2 \ref{##1}}%
		\begin{rep@theorem}}%
		{\end{rep@theorem}}}
\begin{document}
%\title{Non-Linear Corrupted Sensing with Geometric Constraints}
\title{Recovery of Structured Signals From Corrupted Non-Linear Measurements}
% %%% Single author, or several authors with same affiliation:
% \author{
%   \IEEEauthorblockN{Stefan M.~Moser}
%   \IEEEauthorblockA{ETH Zürich\\
%                    ISI (D-ITET)\\
 %                   CH-8092 Zürich, Switzerland\\
%                    Email: moser@isi.ee.ethz.ch}
%}
%%% Several authors with up to three affiliations:
\author{
  \IEEEauthorblockN{Zhongxing~Sun and Wei~Cui}
  \IEEEauthorblockA{School of Information and Electronics\\
  	                Beijing Institute of Technology\\
  	                Beijing 100081, China\\
                    Email: \{zhongxingsun,~cuiwei\}@bit.edu.cn}
  \and
  \IEEEauthorblockN{Yulong~Liu}
  \IEEEauthorblockA{School of Physics\\
                    Beijing Institute of Technology\\
                    Beijing 100081, China\\
                    Email: yulongliu@bit.edu.cn}
}
%%% Many authors with many affiliations:
% \author{%
%   \IEEEauthorblockN{Albus Dumbledore\IEEEauthorrefmark{1},
%                     Olympe Maxime\IEEEauthorrefmark{2},
%                     Stefan M.~Moser\IEEEauthorrefmark{3}\IEEEauthorrefmark{4},
%                     and Harry Potter\IEEEauthorrefmark{1}}
%   \IEEEauthorblockA{\IEEEauthorrefmark{1}%
%                     Hogwarts School of Witchcraft and Wizardry,
%                     1714 Hogsmeade, Scotland,
%                     \{dumbledore, potter\}@hogwarts.edu}
%   \IEEEauthorblockA{\IEEEauthorrefmark{2}%
%               r      Beauxbatons Academy of Magic,
%                     1290 Pyrénées, France,
%                     maxime@beauxbatons.edu}
%   \IEEEauthorblockA{\IEEEauthorrefmark{3}%
%                     ETH Zürich, ISI (D-ITET), ETH Zentrum,
%                     CH-8092 Zürich, Switzerland,
%                     moser@isi.ee.ethz.ch}
%   \IEEEauthorblockA{\IEEEauthorrefmark{4}%
%                     National Chiao Tung University (NCTU),
%                     Hsinchu, Taiwan,
%                     moser@isi.ee.ethz.ch}
% }
\maketitle
%%%%%%
%% Abstract:
%% If your paper is eligible for the student paper award, please add
%% the comment "THIS PAPER IS ELIGIBLE FOR THE STUDENT PAPER
%% AWARD." as a first line in the abstract.
%% For the final version of the accepted paper, please do not forget
%% to remove this comment!
%%
\begin{abstract}
This paper studies the problem of recovering a structured signal from a relatively small number of corrupted non-linear measurements. Assuming that signal and corruption are contained in some structure-promoted set, we suggest an extended Lasso to disentangle signal and corruption. We also provide conditions under which this recovery procedure can successfully reconstruct both signal and corruption.
\end{abstract}

\section{Introduction}

Throughout science and engineering, one is often faced with the challenge of recovering a structured signal from a relatively small number of \emph{linear} observations
\begin{equation*}
  \vy =  \mPhi \vx^{\star} + \vn,
\end{equation*}
where $\mPhi \in \R^{m \times n}$ is the sensing matrix, $\vx^{\star} \in \R^{n}$ is the desired structured signal, and $\vn \in \R^{m}$ is the random noise. The objective is to estimate $\vx^{\star}$ from given knowledge of $\vy$ and $\mPhi$. Since this problem is generally ill-posed, tractable recovery is possible when the signal is suitably structured. A general model to encode signal structure is to assume that $\vx^{\star}$ belongs to some set $\SS\subset \R^n$. For example, to promote sparsity (or low-rankness) of the solution, one can choose $\SS$ to be a scaled $\ell_1$ (or nuclear norm) ball. Then the signal can be recovered by solving the following $\SS$-Lasso problem:
\begin{align}\label{Lasso}
\min_{\vx} ~\|\vy - \mPhi\vx \|_2,\quad \text{s.t.~}& \vx \in \SS.
\end{align}
The performance of $\SS$-Lasso (and its variants) under linear measurements has been extensively studied in the literature, see e.g., \cite{chandrasekaran2012convex, tropp2015convex, vershynin2015estimation, thrampoulidis2015recovering} and references therein.

However, in many applications of interest the linear model may not be plausible. Important examples include $1$-bit compressed sensing \cite{boufounos20081} and generalized linear models \cite{mccullagh1984generalized}. In these scenarios, measurements can be approached with the semiparametric single index model \cite{ichimura1993semiparametric, horowitz1996direct}
\begin{equation}\label{eq: model1}
\vy_i = f_i(\ip{\mPhi_i}{\vx^{\star}}), ~~~~i  =  1, \ldots, m,
\end{equation}
where $f_i: \R \rightarrow \R$ are independent copies of an unknown \emph{non-linear} map $f$ (or it may be deterministic) and $\mPhi_i^T$ denote rows of $\mPhi$. In a seminal paper \cite{Plan2015The}, Plan and Vershynin present a theoretical analysis for $\SS$-Lasso under the non-linear observation model \eqref{eq: model1}. Their results show that non-linear observations behave as scaled and noisy linear observations, and under suitable conditions, a scaled original signal can be recovered by $\SS$-Lasso.

This work extends that of \cite{Plan2015The} to a more challenging setting, in which the non-linear measurements are corrupted by an unknown but structured vector $\vv^{\star}$, i.e.,
\begin{equation}\label{eq: model}
\vy_i = f_i(\ip{\mPhi_i}{\vx^{\star}})+\sqrt{m}\vv_i^{\star}, ~~~~i  =  1, \ldots, m.
\end{equation}
This model is motivated by some practical applications:
\begin{itemize}
	\item Clipping or saturation noise: signal clipping or saturation frequently appears in power-amplifiers and analog-to-digital converters (ADC) because of the limited range in the devices \cite{Abel1991Restoring, Laska2011Democracy}. In those cases, one always measures $f(\mPhi\vx)$ rather than $\mPhi\vx$, where $f$ is typically a nonlinear map. And saturation occurs when the input exceeding the maximum or minimum device output. Unlike the white noise or quantization error, the saturation can be unbounded. However, it will be sparse provided the clipping
	level is high enough, which means the model \eqref{eq: model} is appropriate. The elimination of saturation effect may be difficult in a broad class of radar and sonar systems \cite{Gray1971Quantization}.
	\item State estimation for electrical power networks: non-linear measurements $f(\vx)$ caused by device constraints are sent to
	the central control center in powers networks. These measurements may contain gross errors or outliers modeled by structured corruptions which have arbitrary amplitude due to system malfunctions. So state estimation in power networks needs to detect and eliminate these large measurement errors \cite{Handschin1974Bad, Xuweiyu2013Sparse, Broussolle2007State, Monticelli1983Reliable}.
\end{itemize}
In particular, if $f$ is the identity function, the model \eqref{eq: model} reduces to the standard corrupted sensing problem \cite{Foygel2014Corrupted, Mccoy2014Sharp, Amelunxen2014Living, Chen2017Corrupted, Zhang2017On, Jinchi2018Stable}.

Assume that $(\vx^{\star}, \vv^{\star})$ belongs to some set $\TT\subset\R^n\times\R^m$ which is meant to capture structures of signal and corruption. A natural method to disentangle signal and corruption is to minimize the $\ell_2$ loss subject to a geometric constraint:
\begin{align}\label{T_Lasso}
\min_{\vx, \vv} ~\|\vy-\mPhi\vx-\sqrt{m}\vv\|_2,\quad\text{s.t.~}& (\vx,\vv) \in \TT.
\end{align}
This procedure might be regarded as an extension of $\SS$-Lasso \cite{Plan2015The}.

The goal of this paper is to investigate the performance of $\TT$-Lasso \eqref{T_Lasso} under the model \eqref{eq: model}. To this end, we require some model assumptions:
\begin{itemize}
    \item Gaussian measurements: we assume that rows $\mPhi_i^T$ of $\mPhi$ are i.i.d. Gaussian vectors, i.e., $\mPhi_i \sim \NN(0, \mI_n)$. Note that the factor $\sqrt{m}$ in the model \eqref{eq: model} makes the columns of both $\mA$ and $\sqrt{m}\mI_m$ have the same scale, which helps our theoretical results to be more interpretable.

    %Note that if we assume that $\mPhi_i\sim\NN(0,\frac{1}{m}\mI_n)$ as in \cite{Foygel2014Corrupted, Jinchi2018Stable}, the factor $\sqrt{m}$ can be removed from \eqref{eq: model}.
    \item Unit norm of the signal: without loss of generality, we assume that $\|\vx^{\star}\|_2 = 1$ because the norm of $\vx^{\star}$ may be absorbed into the non-linear function $f$.
    \item Sub-Gaussian distribution of $\bar{\vy}_i=f_i(\ip{\mPhi_i}{\vx^{\star}})$: we assume that $\bar{\vy}_i=f_i(\ip{\mPhi_i}{\vx^{\star}})$ are sub-Gaussian variables as in \cite{Plan2017High}. To understand this assumption, note that $\ip{\mPhi_i}{\vx^{\star}}$ is Gaussian, $\bar{\vy}_i$ will be sub-Gaussian provided that $f$ does not grow faster than linearly, namely, $f(x) \leq a + b |x|$ for some scalars $a$ and $b$.
\end{itemize}

Under the above assumptions, we establish theoretical guarantees for $\TT$-Lasso \eqref{T_Lasso} under corrupted non-linear measurements \eqref{eq: model}. Our results demonstrate that under proper conditions, it is possible to disentangle signal and corruption in this quite challenging scenario.

%
%the contribution of this paper can be summarized as follows.
%\begin{itemize}
%	\item First, we establish the corrupted non-linear signal model which are widely used in practices and we illustrate that recovery can be done using an extension of $\TT$-Lasso.
%	\item Second, we present theoretical guarantee together with a geometric analysis of the extended $\TT$-Lasso procedure.
%\end{itemize}

\section{Preliminaries}\label{Preliminaries}
In this section, we review some preliminaries which underlie our analysis. Hereafter, $\S^{n-1}$ and $\B_2^n$  denote the unit sphere and ball in $\R^n$ under the $\ell_2$ norm respectively. We use the notation $C, C', c_1, c_2, \textrm{etc.},$ to refer to absolute constants whose value may change from line to line.

\subsection{Convex Geometry}
The \emph{tangent cone} of a set $\SS\subset \R^n$ at $\vx$ is defined as
\begin{equation*}\label{DefinitionofTangentCone}
\DD(\SS,\vx) = \{t\vu: t \ge 0, \vu \in \SS-\vx\}.
\end{equation*}
The tangent cone may also be called the descent cone.

The \emph{Gaussian width} and the \emph{Gaussian complexity} of a set $\SS \subset \R^{n}$ are, respectively, defined as
\begin{equation*}\label{Definition_Gaussian_width}
\omega(\SS) := \E \sup_{\vx \in \SS} \langle \vg, \vx \rangle, ~~ \textrm{where} ~~\vg\sim\NN(0,\mI_n),
\end{equation*}
and
\begin{equation*}\label{Definition_Gaussian_complexity}
\gamma(\SS) := \E \sup_{\vx \in \SS} |\langle \vg, \vx \rangle|, ~~ \textrm{where} ~~\vg\sim\NN(0,\mI_n).
\end{equation*}
These two geometric quantities are closely related to each other \cite{Liaw2017A}:
\begin{equation}\label{Relation}
\left(\omega(\SS)+\|\vy\|_2\right)/3 \leq \gamma(\SS) \leq 2(\omega(\SS)+\|\vy\|_2) ~~ \forall~\vy \in \SS.
\end{equation}

The \emph{local Gaussian width} of a set $\SS \subset \R^{n}$ is a function of parameter $t \ge 0$ defined as
\begin{equation*}\label{Definition_LGaussian_width}
\omega_t(\SS) := \E \sup_{\vx \in \SS \cap t\B_2^n} \langle \vg, \vx \rangle, ~~ \textrm{where} ~~\vg\sim\NN(0,\mI_n).
\end{equation*}

\subsection{High-Dimensional Probability}
A random variable $X$ is called a \emph{sub-Gaussian random variable} if the \emph{sub-Gaussian norm}
\begin{equation*}\label{Sub-Gaussian_Definition}
\|X\|_{\psi_2} = \inf\{t > 0: \E \exp(X^2/t^2) \leq 2\}
\end{equation*}
is finite. A random vector $\vx$ in $\R^n$ is \emph{sub-Gaussian random vector} if all of its one-dimensional marginals are sub-Gaussian random variables. The \emph{sub-Gaussian norm} of $\vx$ is defined as
\begin{equation*}
\|\vx\|_{\psi_2}:=\sup_{\vy\in\S^{n-1}}\big\| \ip{\vx}{\vy} \big\|_{\psi_2}.
\end{equation*}
A random vector $\vx$ in $\R^n$ is \emph{isotropic} if $\E(\vx\vx^T) = \mI_n$.

\subsection{A Useful Tool}

In the proofs of our main results, we make heavy use of the following matrix deviation inequality, which implies a tight lower bound for the restricted singular value of the extended sensing matrix $[\mA, \sqrt{m}\mI_m]$.
\begin{fact}[Extended Matrix Deviation Inequality, \cite{Jinchi2018Stable}]
	\label{Ext MDI}
	Let $\mA$ be an $m \times n$ matrix whose rows $\mA_i^T$ are independent centered isotropic sub-Gaussian vectors with $K = \max_i \|\mA_i\|_{\psi_2}$, and $\TT$ be a bounded subset of $\R^n\times\R^m$. Then
%	\begin{align*}
%	\E\sup_{(\va,\vb)\in \TT} &\left| \|\mA\va+\sqrt{m}\vb\|_2 - \sqrt{m}\cdot\sqrt{\|\va\|_2^2 + \|\vb\|_2^2} \right|\\
%	&\leq CK^2\cdot\gamma(\TT).
%	\end{align*}
	for any $s\geq 0$, the event
	\begin{align*}
	\sup_{ (\va,\vb)\in \TT }&\left| \|\mA\va+\sqrt{m}\vb\|_2 - \sqrt{m}\cdot\sqrt{\|\va\|_2^2 + \|\vb\|_2^2} \right|\\
	&\leq CK^2[ \gamma(\TT) + s\cdot \rad(\TT) ]
	\end{align*}
	holds with probability at least $1-\exp(-s^2)$, where $\rad(\TT) := \sup_{\vx \in \TT}\|\vx\|_2$ denotes the radius of $\TT$.
\end{fact}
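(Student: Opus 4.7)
The strategy is to view this statement as a generalization of the classical matrix deviation inequality (MDI) of \cite{Liaw2017A}, applied to the augmented matrix $\tilde{\mA} := [\mA,\sqrt{m}\,\mI_m] \in \R^{m\times(n+m)}$. Writing $\vz = (\va,\vb)$, the target quantity is exactly $\|\tilde{\mA}\vz\|_2 - \sqrt{m}\,\|\vz\|_2$, and one checks that $\E\|\tilde{\mA}\vz\|_2^2 = m\|\vz\|_2^2$ for every fixed $\vz$, so $\sqrt{m}\,\|\vz\|_2$ is the correct centering. The difficulty relative to the classical MDI is that the rows of $\tilde{\mA}$ are independent but neither centered nor identically distributed, so the result of \cite{Liaw2017A} cannot be invoked as a black box; one must re-run its proof structure by hand.

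First I would split the target supremum as its expectation plus a fluctuation, and treat each piece separately. For the fluctuation, note that the map $\mA \mapsto \sup_{\vz\in\TT}\bigl|\|\tilde{\mA}\vz\|_2-\sqrt{m}\|\vz\|_2\bigr|$ is $\rad(\TT)$-Lipschitz in the Frobenius norm: replacing $\mA$ by $\mA'$ changes each $\|\mA\va+\sqrt{m}\vb\|_2$ by at most $\|\mA-\mA'\|_{\mathrm{op}}\,\|\va\|_2 \leq \|\mA-\mA'\|_F\cdot\rad(\TT)$. A standard sub-Gaussian Lipschitz concentration inequality (exactly as used in the proof of the classical MDI) then yields a deviation of order $K^2 s\cdot\rad(\TT)$ with probability at least $1-\exp(-s^2)$.

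The heart of the argument is showing that the mean is at most $CK^2\gamma(\TT)$. I would combine the algebraic identity
$$\|\tilde{\mA}\vz\|_2^2 - m\|\vz\|_2^2 \;=\; \bigl(\|\mA\va\|_2^2-m\|\va\|_2^2\bigr) + 2\sqrt{m}\,\ip{\mA\va}{\vb}$$
with $|u-v|=|u^2-v^2|/(u+v)$ and the lower bound $u+v \geq \sqrt{m}\|\vz\|_2$, splitting the resulting supremum into a ``diagonal'' piece and a ``cross'' piece. The diagonal piece, after factoring $\|\mA\va\|_2^2 - m\|\va\|_2^2 = (\|\mA\va\|_2 - \sqrt{m}\|\va\|_2)(\|\mA\va\|_2 + \sqrt{m}\|\va\|_2)$ and using $\|\vz\|_2 \geq \|\va\|_2$, is controlled by the classical MDI applied to $\mA$ on the projection $\{\va : (\va,\vb)\in\TT\}$ together with a high-probability operator norm bound $\|\mA\va\|_2 \lesssim \sqrt{m}\|\va\|_2$ to tame the second factor. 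The cross piece $\sup_{(\va,\vb)\in\TT}|\ip{\mA\va}{\vb}|/\|\vz\|_2$ is a bilinear sub-Gaussian chaos whose expectation is bounded via a Talagrand/Gordon comparison against a Gaussian process indexed by $\TT$, producing a Gaussian width of the joint set. Converting these Gaussian widths to Gaussian complexities via \eqref{Relation} assembles the bound $CK^2\gamma(\TT)$.

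The main obstacle is the degenerate regime where $\|\vz\|_2$ is near zero, in which the denominator $R_\vz+\mu_\vz$ becomes small and the $|u-v|=|u^2-v^2|/(u+v)$ reduction breaks down. I would handle this region separately by invoking the standard high-probability operator-norm bound $\|\mA\|_{\mathrm{op}} \lesssim K(\sqrt{m}+\sqrt{n})$ to get $\|\mA\va+\sqrt{m}\vb\|_2 \leq C\sqrt{m}\|\vz\|_2$ uniformly, which forces $|\|\tilde{\mA}\vz\|_2-\sqrt{m}\|\vz\|_2|$ to be automatically small in absolute terms whenever $\|\vz\|_2$ is tiny, matching the target bound.
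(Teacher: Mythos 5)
First, note that the paper does not prove Fact~\ref{Ext MDI} at all: it is imported verbatim from \cite{Jinchi2018Stable}, and the proof there (mirroring \cite{Liaw2017A}) shows that the random process $X_{(\va,\vb)}:=\|\mA\va+\sqrt{m}\vb\|_2-\sqrt{m}\sqrt{\|\va\|_2^2+\|\vb\|_2^2}$ has sub-Gaussian increments, $\|X_{\vu}-X_{\vv}\|_{\psi_2}\leq CK^2\|\vu-\vv\|_2$, and then invokes Talagrand's Majorizing Measure Theorem (Fact~\ref{Talagrand Them} of this paper), which delivers the $\gamma(\TT)$ term and the $s\cdot\rad(\TT)$ term simultaneously. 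Your proposal takes a genuinely different route (mean bound plus concentration around the mean), and that route has two concrete gaps.

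The first gap is the fluctuation step. Lipschitz concentration of $\mA\mapsto\sup_{\vz\in\TT}|X_{\vz}|$ in the Frobenius metric is a property of the Gaussian measure (or of measures satisfying a log-Sobolev/convex-concentration inequality); it is \emph{false} for general centered isotropic sub-Gaussian rows, and it is precisely the failure of this step that makes the matrix deviation inequality of \cite{Liaw2017A} a nontrivial theorem rather than a two-line corollary of Gaussian concentration. The classical MDI proof does not use Lipschitz concentration; it gets the tail from the MMT. The second gap is the mean bound via $|u-v|=|u^2-v^2|/(u+v)$ with denominator $\sqrt{m}\|\vz\|_2$. To tame the second factor of the diagonal piece and to rescue the degenerate regime $\|\vz\|_2\approx 0$ you invoke $\|\mA\va\|_2\lesssim\sqrt{m}\|\va\|_2$, respectively $\|\mA\|_{\mathrm{op}}\lesssim K(\sqrt{m}+\sqrt{n})$; but in the regime of interest $n\gg m$ the operator norm is of order $\sqrt{n}$, not $\sqrt{m}$, so neither estimate produces a quantity that is ``automatically small'' at the scale $CK^2\gamma(\TT)$. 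Even restricting to the projection of $\TT$ and using the one-dimensional MDI for $\|\mA\va\|_2$ leaves an uncontrolled remainder of order $K^4\gamma(\TT_{\va})^2/(\sqrt{m}\|\vz\|_2)$, which is not dominated by $K^2\gamma(\TT)$ uniformly over $\TT$. These are the two places where the argument would fail; the increment-plus-MMT route of \cite{Jinchi2018Stable} avoids both.
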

In particular, when $\TT$ is a subset of $\S^{n+m-1}$ or $t\S^{n+m-1}$, Fact \ref{Ext MDI} implies that the event
\begin{equation} \label{LowerBound}
\inf_{(\va, \vb) \in \TT \cap \S^{n+m-1}}\|\mA \va+ \sqrt{m}\vb\|_2 \geq \sqrt{m} - CK^2{\gamma(\TT\cap \S^{n+m-1})}
\end{equation}
holds with probability at least $1-\exp \{-\gamma(\TT\cap \S^{n+m-1})^2\}$, or the event

\begin{equation}\label{LowerBound2}
\begin{split}
\inf_{(\va, \vb) \in \TT \cap t\S^{n+m-1}}&\left\|\mA \va+ \sqrt{m}\vb\right\|_2\\
&\geq t\sqrt{m} - CK^2{\gamma(\TT\cap t\S^{n+m-1})}
\end{split}
\end{equation}
holds with probability at least $1-\exp \{-\gamma(\TT\cap t\S^{n+m-1})^2/t^2\}$.
%============================================================================
\section{Main Results}\label{MainResults}

Before stating our result, we need to introduce two nonlinearity parameters, which are essentially the intrinsic mean and variance associated with the nonlinear map $f$. Let $g$ be a standard normal
random variable, the two parameters are defined as \cite{Plan2015The}:
\begin{align}\label{parameters}
  \textrm{Mean term}:       &~~ \mu = \E(f(g) \cdot g), \\
  \textrm{Variance term}:   &~~  \sigma^2 = \E (f(g) - \mu g)^2.
\end{align}

We then present two main results, one considers the case when the signal $(\mu \vx^{\star},\vv^{\star})$ lies at an extreme point of $\TT$, and the other assumes that $(\mu \vx^{\star},\vv^{\star})$ lies in the interior of $\TT$.

\begin{theorem}
	\label{them: Tangent_Cone}
	Let $(\hat{\vx}, \hat{\vv})$ be the solution to $\TT$-Lasso \eqref{T_Lasso}. Suppose that $\mPhi_i \sim \NN(0, \mI_n)$, $\vx^{\star} \in S^{n-1}$, and that $\bar{\vy}_i = f_i(\ip{\mPhi_i}{\vx^{\star}})$ are centered sub-Gaussian random variables with sub-Gaussian norm $\psi$. Assume that $(\mu \vx^{\star},\vv^{\star}) \in \TT$, and  let $\DD: = \DD(\TT, (\mu \vx^{\star},\vv^{\star}))$.
	If
	\begin{align}\label{NumberofMeasurements1}
	   m \geq C\cdot\omega_1(\DD)^2,
	\end{align}
	then, for any $0<s\leq\sqrt{m}$, the event
	\begin{align*}
	  \sqrt{\|\hat{\vx} - \mu \vx^{\star}\|_2^2+\|\hat{\vv} - \vv^{\star}\|_2^2} \leq \frac{C}{\sqrt{m}}\big(\omega_1(\DD)(\sigma+\psi+\mu)+s\sigma\big)
	\end{align*}
	holds with probability at least $1-2\exp(-cs^2\sigma^4/(\psi+\mu)^4)-\exp(-\gamma(\DD\cap\S^{n+m-1})^2)$.
\end{theorem}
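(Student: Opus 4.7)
The plan is to execute the standard optimality-plus-restricted-lower-isometry argument, adapted to the joint variable $(\vx,\vv)$. Let $\vh_x := \hat\vx - \mu\vx^{\star}$, $\vh_v := \hat\vv - \vv^{\star}$, and $\vh := (\vh_x, \vh_v)$; since $(\mu\vx^{\star},\vv^{\star})$ and $(\hat\vx,\hat\vv)$ both lie in $\TT$, the error $\vh$ belongs to the descent cone $\DD$. Introduce the effective residual
\begin{equation*}
\bm{r} := \bar{\vy} - \mu\mPhi\vx^{\star}, \qquad \bm{r}_i = f_i(g_i) - \mu g_i, \quad g_i := \ip{\mPhi_i}{\vx^{\star}},
\end{equation*}
so that $\vy = \mu\mPhi\vx^{\star} + \sqrt{m}\vv^{\star} + \bm{r}$. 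Since $\bar{\vy}_i$ is centered and by the definition of $\mu$, the entries $\bm{r}_i$ are i.i.d.\ centered with $\E\bm{r}_i^2 = \sigma^2$ and sub-Gaussian norm of order $\psi+\mu$. Optimality of $(\hat\vx,\hat\vv)$ in \eqref{T_Lasso} yields $\|\bm{r} - \mPhi\vh_x - \sqrt{m}\vh_v\|_2 \leq \|\bm{r}\|_2$, which rearranges to the basic inequality
\begin{equation*}
\|\mPhi\vh_x + \sqrt{m}\vh_v\|_2^2 \;\leq\; 2 \ip{\bm{r}}{\mPhi\vh_x + \sqrt{m}\vh_v}.
\end{equation*}

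Next I would lower-bound the left-hand side via Fact~\ref{Ext MDI}. Applying \eqref{LowerBound} to $\DD\cap\S^{n+m-1}$ and scaling by $\|\vh\|_2$, together with \eqref{Relation} to dominate $\gamma(\DD\cap\S^{n+m-1}) \lesssim \omega_1(\DD) + 1$, the sample-size hypothesis $m \geq C\omega_1(\DD)^2$ forces
\begin{equation*}
\|\mPhi\vh_x + \sqrt{m}\vh_v\|_2 \;\geq\; \tfrac{1}{2}\sqrt{m}\,\|\vh\|_2
\end{equation*}
on an event of probability at least $1 - \exp(-\gamma(\DD\cap\S^{n+m-1})^2)$.

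The technical core is to bound the right-hand side by $2\|\vh\|_2 \cdot S$ with $S := \sup_{(\vu,\vw) \in \DD \cap \S^{n+m-1}} |\ip{\bm{r}}{\mPhi\vu} + \sqrt{m}\ip{\bm{r}}{\vw}|$. I would split into the two pieces. For $\ip{\bm{r}}{\mPhi\vu}$, the key trick is the orthogonal decomposition $\mPhi\vu = \ip{\vu}{\vx^{\star}}\mPhi\vx^{\star} + \mPhi(\mI_n - \vx^{\star}(\vx^{\star})^T)\vu$: the perpendicular piece is Gaussian and independent of $\bm{r}$, so conditional on $\bm{r}$ a Gaussian-process/chaining argument yields a supremum of order $\|\bm{r}\|_2 \cdot \omega_1(\DD)$; the parallel piece is $\ip{\vu}{\vx^{\star}}\sum_i \bm{r}_i g_i$, a mean-zero sub-exponential sum whose Bernstein bound is $\lesssim (\psi+\mu)\sqrt{m}$, absorbed into $(\psi+\mu)\sqrt{m}\,\omega_1(\DD)$ because $\omega_1(\DD)\gtrsim 1$. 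For $\sqrt{m}\ip{\bm{r}}{\vw}$, the sub-Gaussian process bound of \cite{Liaw2017A} gives a contribution of order $(\psi+\mu)\sqrt{m}\,\omega_1(\DD)$. Finally I would concentrate $\|\bm{r}\|_2^2$ around $m\sigma^2$ via Hanson--Wright; choosing the deviation $\sigma^2 s\sqrt{m}$ yields $\|\bm{r}\|_2 \leq \sigma\sqrt{m} + Cs\sigma$ for $s \leq \sqrt{m}$ with probability at least $1 - 2\exp(-cs^2\sigma^4/(\psi+\mu)^4)$, producing both the $\sigma\sqrt{m}\,\omega_1(\DD)$ main term in the perpendicular contribution and an additional $s\sigma\,\omega_1(\DD) \leq s\sigma\sqrt{m}$ tail term (by $\omega_1(\DD)\leq\sqrt{m/C}$). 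All contributions combined give
\begin{equation*}
S \;\lesssim\; (\sigma+\psi+\mu)\sqrt{m}\,\omega_1(\DD) + s\sigma\sqrt{m}.
\end{equation*}

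Inserting the two bounds into the basic inequality yields $(m/4)\|\vh\|_2^2 \leq 2\|\vh\|_2\,S$, i.e., $\|\vh\|_2 \leq 8S/m$, which is precisely the asserted bound. The main obstacle is Step~3: because $\bm{r}$ and $\mPhi\vu$ share the randomness of $\mPhi\vx^{\star}$, $\ip{\bm{r}}{\mPhi\vu}$ is \emph{not} a clean sub-Gaussian chaos, and the orthogonal decomposition together with the sharp Hanson--Wright tail on $\|\bm{r}\|_2^2$ is what produces the distinctive $\sigma^4/(\psi+\mu)^4$ exponent in the failure probability.
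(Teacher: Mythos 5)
Your proposal is correct and follows essentially the same route as the paper: the same optimality-based basic inequality, the same lower bound on $\|\mPhi\vh_x+\sqrt{m}\vh_v\|_2$ via the extended matrix deviation inequality, and the same treatment of the cross term by splitting off $\sqrt{m}\ip{\bm{r}}{\vw}$ (handled by the sub-Gaussian process bound) and decomposing $\mPhi\vu$ along $\vx^{\star}$ and its orthogonal complement, with Bernstein on the parallel sub-exponential sum and a conditional Gaussian argument on the independent perpendicular part. The only cosmetic differences are that the paper concentrates $\sum_i\vz_i^2$ by Bernstein rather than invoking Hanson--Wright (equivalent for a diagonal quadratic form) and books the parallel-piece contribution into the $s\sigma$ term rather than the $(\psi+\mu)\omega_1(\DD)$ term.
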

%\begin{remark}[Anisotropic case]	
%\end{remark}
\begin{remark}[Relation to corrupted sensing]
If $f$ is the identity function, then we have $\mu=1, ~\sigma=0$, and $\psi=c$. Thus Theorem \ref{them: Tangent_Cone} implies that if $m\ge C\cdot\omega(\DD\cap\B_2^{n+m})^2$,  $\TT$-Lasso \eqref{T_Lasso} succeeds with high probability, which is consistent with the constrained recovery results in \cite[Theorem 1]{Foygel2014Corrupted} and \cite[Theorem 2]{Jinchi2018Stable}.
\end{remark}

Note that $\omega_1(\DD)^2$ is the \emph{effective dimension} of the descent cone $\DD$. When $(\mu\vx^{\star},\vv^{\star})$ lies on the boundary of $\TT$, which might lead to a narrow descent cone and hence a small effective dimension, then Theorem \ref{them: Tangent_Cone} becomes quite reasonable: a good estimation is guaranteed if the number of observations exceeds the effective dimension of $\DD$, which may be much smaller than the ambient dimension $n+m$. However, when $(\mu\vx^{\star},\vv^{\star})$ is an interior point of $\TT$, the descent cone is the entire space, the effective dimension $\omega_1(\DD)^2$ is of the order of the ambient dimension $n+m$. In this case, the results in Theorem \ref{them: Tangent_Cone} become meaningless. The following theorem deals with this situation. As it turns out that local Gaussian width serves as a new measure to characterize the low dimension structure of set $\TT$ which is unnecessary to be a cone.

\begin{theorem}
	\label{them: noTangent_Cone}
	Let $(\hat{\vx}, \hat{\vv})$ be the solution to $\TT$-Lasso \eqref{T_Lasso}. Suppose that $\mPhi_i \sim \NN(0, \mI_n)$, $\vx^{\star} \in S^{n-1}$, and that $\bar{\vy}_i = f_i(\ip{\mPhi_i}{\vx^{\star}})$ are centered sub-Gaussian random variables with sub-Gaussian norm $\psi$. Assume that $(\mu \vx^{\star},\vv^{\star}) \in \TT$ and let $\KK:=\TT-(\mu \vx^{\star},\vv^{\star})$ is a star shaped set\footnote{$\KK$ is a star shaped set if it satisfies $\lambda \KK \subset \KK$ for any $0 \leq \lambda \leq 1$. Specially, any convex set containing origin is star shaped.}.
	If
	\begin{align}\label{NumberofMeasurements2}
	m \geq C\cdot\omega_t(\KK)^2/t^2,
	\end{align}
	then, for any $t>0,~0<s\leq\sqrt{m}$, the event
	\begin{align*}
	&\sqrt{\|\hat{\vx} - \mu \vx^{\star}\|_2^2+\|\hat{\vv} - \vv^{\star}\|_2^2}\\
	&\qquad\qquad \leq t+\frac{C}{\sqrt m}\left(\frac{\omega_t(\KK)(\sigma+\psi+\mu)}{t} + s\sigma \right)
	\end{align*}
	holds with probability at least $1-2\exp(-cs^2\sigma^4/(\psi+\mu)^4)-\exp(-\gamma(\KK\cap t\S^{n+m-1})^2/t^2)$.
\end{theorem}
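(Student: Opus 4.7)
The plan is to adapt the Plan--Vershynin style analysis used for Theorem \ref{them: Tangent_Cone}, replacing the tangent-cone reduction with a star-shape rescaling that lets me work at the free scale $t$ and invoke the local Gaussian width $\omega_t(\KK)$. Write $\vh := (\hat{\vh}_{\vx}, \hat{\vh}_{\vv}) = (\hat{\vx} - \mu\vx^\star, \hat{\vv} - \vv^\star) \in \KK$, set $r := \|\vh\|_2$, and introduce the effective noise $\ve := \bar{\vy} - \mu \mPhi\vx^\star$, whose coordinates $\ve_i = f_i(g_i) - \mu g_i$ with $g_i = \langle \mPhi_i, \vx^\star\rangle$ are iid, centered, have variance $\sigma^2$, and sub-Gaussian norm $\lesssim \psi + \mu$. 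Feasibility of $(\mu\vx^\star, \vv^\star)$ together with optimality of $(\hat{\vx}, \hat{\vv})$ gives $\|\ve - \mPhi \hat{\vh}_{\vx} - \sqrt{m}\hat{\vh}_{\vv}\|_2 \leq \|\ve\|_2$, which upon squaring produces the basic inequality
\begin{equation*}
\|\mPhi \hat{\vh}_{\vx} + \sqrt{m}\hat{\vh}_{\vv}\|_2^2 \leq 2\,\langle \ve,\, \mPhi \hat{\vh}_{\vx} + \sqrt{m}\hat{\vh}_{\vv}\rangle.
\end{equation*}

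To lower bound the left-hand side I split on $r$. If $r \leq t$ the conclusion holds automatically, so suppose $r > t$. By the star-shape hypothesis, $\lambda\vh \in \KK \cap t\S^{n+m-1}$ for $\lambda := t/r \in (0,1)$; applying the extended matrix deviation bound \eqref{LowerBound2} to $\lambda\vh$ and then dividing by $\lambda$, under \eqref{NumberofMeasurements2} and on an event of probability at least $1 - \exp(-\gamma(\KK \cap t\S^{n+m-1})^2/t^2)$, yields $\|\mPhi \hat{\vh}_{\vx} + \sqrt{m}\hat{\vh}_{\vv}\|_2 \gtrsim r\sqrt{m}$.

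To upper bound the right-hand side, I use the Plan--Vershynin orthogonal decomposition $\mPhi_i = g_i \vx^\star + \mPhi_i^\perp$, under which $\ve$ is independent of $\mPhi^\perp$. This lets me expand
\begin{equation*}
\langle \ve,\, \mPhi \hat{\vh}_{\vx} + \sqrt{m}\hat{\vh}_{\vv}\rangle = \langle \vx^\star, \hat{\vh}_{\vx}\rangle\,\langle \ve, \vg\rangle + \langle \ve, \mPhi^\perp \hat{\vh}_{\vx}\rangle + \sqrt{m}\,\langle \ve, \hat{\vh}_{\vv}\rangle,
\end{equation*}
and then, after the same star-shape rescaling to $\KK \cap t\S^{n+m-1}$ and scaling back by $r/t$, control each summand: the scalar $\langle \ve, \vg\rangle$ by a Bernstein-type concentration at level $s\sigma\sqrt{m}$ (since $\E[\ve_i g_i] = 0$ from the definition of $\mu$); the conditionally Gaussian process $\langle \ve, \mPhi^\perp \hat{\vh}_{\vx}\rangle$ by Chevet/Slepian comparison at level $\|\ve\|_2\,\omega_t(\KK) \lesssim \sigma\sqrt{m}\,\omega_t(\KK)$; and the sub-Gaussian functional $\sqrt{m}\langle \ve, \hat{\vh}_{\vv}\rangle$ by sub-Gaussian process comparison at level $\sqrt{m}(\psi+\mu)\,\omega_t(\KK)$. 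Together these give an upper bound of order $r\sqrt{m}\bigl[(\sigma + \psi + \mu)\omega_t(\KK)/t + s\sigma\bigr]$.

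Combining the two sides of the basic inequality and cancelling a factor of $r$ produces the advertised estimate. The main technical obstacle, distinguishing this argument from the proof of Theorem \ref{them: Tangent_Cone}, is that $\KK$ is not a cone: the inequality of the form $\Psi(\vh) \leq (r/t)\sup_{\KK \cap t\S^{n+m-1}}\Psi$ must be applied on both sides of the basic inequality so that the $r/t$ factors cancel cleanly, and the scale-$t$ supremum must be controlled in terms of $\omega_t(\KK)$ rather than of the (possibly much larger) $\omega(\KK)$. A secondary subtlety is that the corruption-side term $\sqrt{m}\langle \ve, \hat{\vh}_{\vv}\rangle$ is not Gaussian in $\hat{\vh}_{\vv}$, only sub-Gaussian, forcing the parameter $\psi+\mu$ into the bound alongside the variance $\sigma$.
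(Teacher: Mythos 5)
Your proposal is correct and follows essentially the same route as the paper: the split on $r\le t$ versus $r>t$, the star-shape rescaling to $\KK\cap t\S^{n+m-1}$ applied to both the restricted-singular-value lower bound (via \eqref{LowerBound2}) and the inner-product supremum, and the orthogonal decomposition $\mPhi_i=\ip{\mPhi_i}{\vx^\star}\vx^\star+\mP^\perp\mPhi_i$ with Bernstein, conditional-Gaussian, and sub-Gaussian (Talagrand) controls on the three resulting terms all mirror the paper's Lemmas \ref{lem: large scale conditioning}, \ref{ip_bound}, and \ref{sub-G}. The only cosmetic difference is invoking a Chevet/Slepian-type comparison where the paper uses Gaussian Lipschitz concentration for the $\mP^\perp$ term; the resulting bounds are the same.
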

\begin{remark}[Local Gaussian width]
Note that if we let $t \to 0$, then ${\omega_t(\KK)}/{t}$ goes to $\omega(\B_2^{n+m})$, which is of the order of $\sqrt{n+m}$. Then the results in Theorem \ref{them: noTangent_Cone} are exact what in Theorem \ref{them: Tangent_Cone} when $(\mu\vx^{\star},\vv^{\star})$ is an interior point of $\TT$. This suggests that Theorem \ref{them: Tangent_Cone} can be regarded as an extreme case of Theorem \ref{them: noTangent_Cone}, and local Gaussian width can better characterizes the low dimension structure of sets than Gaussian width.
\end{remark}
\begin{remark}[Relation to results in \cite{Plan2015The}]
Theorems \ref{them: Tangent_Cone} and  \ref{them: noTangent_Cone} show that the recovery error can be diminished to an arbitrarily small degree provided that the number of measurements is large enough. Specially, in the corruption-free case (i.e., without the $\psi+\mu$ term in the high-probability bounds), our results also agree with Theorem $1.4$ and Theorem $1.9$ in \cite{Plan2015The}.
\end{remark}
\section{Proofs of Main Results}\label{proof of main}

Before proving Theorems \ref{them: Tangent_Cone} and  \ref{them: noTangent_Cone}, we require two useful lemmas.
\begin{lemma}\label{lemma: UI}
	Suppose that $\mPhi_i \sim \NN(0, \mI_n)$ and $\bar{\vy}_i=f_i(\ip{\mPhi_i}{\vx^{\star}})$ are centered sub-Gaussian random variables with sub-Gaussian
	norm $\psi$. Assume $\KK^t=\KK_{\va}^t\times\KK_{\vb}^t \subset t\B_2^{n+m}$ is a star shaped set and let $\vz:=f(\mPhi\vx^{\star})-\mPhi\mu\vx^{\star}$. Then,
%	\begin{align*}
%	\E\sup_{(\va,\vb)\in \TT} \ip{\mPhi\va + \sqrt{m}\vb}{\vz} \leq C\sqrt{m}\big[\omega(\TT)(\sigma+\psi+\mu)+t\eta\big].
%	\end{align*}
	for any $0<s \leq \sqrt{m}$, the event
	\begin{align*}
	\sup_{(\va,\vb)\in \KK^t} \ip{\mPhi\va + \sqrt{m}\vb}{\vz} \leq C\sqrt{m}\big[\omega(\KK^t)(\sigma+\psi+\mu)+st\sigma\big]
	\end{align*}
	holds with probability at least $1-2\exp(-c{s^2\sigma^4}/{(\psi+\mu)^4})$.
\end{lemma}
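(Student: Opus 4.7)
The plan is to exploit the product structure $\KK^t=\KK_{\va}^t\times\KK_{\vb}^t$ and split the supremum as
\[
\sup_{(\va,\vb)\in\KK^t}\ip{\mPhi\va+\sqrt{m}\vb}{\vz}=\sup_{\va\in\KK_{\va}^t}\ip{\mPhi\va}{\vz}+\sqrt{m}\sup_{\vb\in\KK_{\vb}^t}\ip{\vb}{\vz},
\]
and bound the two pieces separately. The entries $\vz_i=f(g_i)-\mu g_i$, with $g_i=\ip{\mPhi_i}{\vx^{\star}}$, are i.i.d.\ centered sub-Gaussian random variables satisfying $\|\vz_i\|_{\psi_2}\leq C(\psi+\mu)$, $\E\vz_i^2=\sigma^2$, and $\sigma\leq C(\psi+\mu)$ (since $\sigma^2\leq 2\E f(g)^2+2\mu^2\lesssim\psi^2+\mu^2$). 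For the corruption piece, $\vb\mapsto\ip{\vb}{\vz}$ is a sub-Gaussian process with increment metric $\lesssim(\psi+\mu)\|\vb-\vb'\|_2$, so Talagrand's comparison together with sub-Gaussian concentration of suprema at deviation level $u\asymp s\sigma^2/(\psi+\mu)^2$ (and $\rad(\KK_{\vb}^t)\leq t$) yields
\[
\sqrt{m}\sup_{\vb\in\KK_{\vb}^t}\ip{\vb}{\vz}\leq C\sqrt{m}\bigl[(\psi+\mu)\omega(\KK_{\vb}^t)+st\sigma\bigr]
\]
with probability at least $1-2\exp(-cs^2\sigma^4/(\psi+\mu)^4)$.

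The signal piece $\sup_{\va}\ip{\mPhi\va}{\vz}$ is more delicate because $\mPhi$ and $\vz$ are dependent through $g=\mPhi\vx^{\star}$. To decouple, I would write $\mPhi=g\vx^{\star T}+\tilde{\mPhi}$, where $\tilde{\mPhi}$ has i.i.d.\ rows $\NN(0,P_{\vx^{\star}}^{\perp})$ independent of $g$ (and hence of $\vz$); splitting $\va=\alpha\vx^{\star}+\va_{\perp}$ gives
\[
\ip{\mPhi\va}{\vz}=\alpha\,g^T\vz+\ip{\tilde{\mPhi}\va_{\perp}}{\vz}.
\]
For the scalar term, $g^T\vz=\sum_i g_i(f(g_i)-\mu g_i)$ is a sum of i.i.d.\ centered (by the defining identity $\E g_if(g_i)=\mu$) sub-exponential variables with $\psi_1$-norm $\lesssim\psi+\mu$ and variance $\lesssim(\psi+\mu)^2$; Bernstein's inequality in the sub-Gaussian regime (which $s\leq\sqrt{m}$ ensures) then gives $|g^T\vz|\leq Cs\sigma^2\sqrt{m}/(\psi+\mu)$, so $|\alpha\,g^T\vz|\leq t|g^T\vz|\leq Cst\sigma\sqrt{m}$. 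For the vector term, conditional on $\vz$ we have $\tilde{\mPhi}^T\vz\sim\NN(0,\|\vz\|_2^2 P_{\vx^{\star}}^{\perp})$, so $\va\mapsto\ip{\tilde{\mPhi}\va_{\perp}}{\vz}$ is a centered Gaussian process with mean supremum $\leq\|\vz\|_2\,\omega(\KK_{\va}^t)$ (the orthogonal projection only decreases Gaussian width), and Gaussian concentration with Lipschitz constant $\|\vz\|_2 t$ at level $s$ yields
\[
\sup_{\va\in\KK_{\va}^t}\ip{\tilde{\mPhi}\va_{\perp}}{\vz}\leq\|\vz\|_2\bigl[\omega(\KK_{\va}^t)+st\bigr].
\]
A separate Bernstein estimate on $\|\vz\|_2^2=\sum_i(f(g_i)-\mu g_i)^2$ (summands sub-exponential with mean $\sigma^2$ and $\psi_1$-norm $\lesssim(\psi+\mu)^2$) shows $\|\vz\|_2\leq C\sqrt{m}\sigma$ with the target probability.

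Combining via a union bound and using the product-set identity $\omega(\KK^t)=\omega(\KK_{\va}^t)+\omega(\KK_{\vb}^t)$, I obtain
\[
\sup_{(\va,\vb)\in\KK^t}\ip{\mPhi\va+\sqrt{m}\vb}{\vz}\leq C\sqrt{m}\bigl[(\sigma+\psi+\mu)\omega(\KK^t)+st\sigma\bigr].
\]
The hardest step will be the signal piece: the orthogonal decomposition $\mPhi=g\vx^{\star T}+\tilde{\mPhi}$ is the key device that decouples the Gaussian process from $\vz$, and Bernstein's inequality must be tuned carefully so that the $st$ deviation term ends up scaled by $\sigma$ rather than $\psi+\mu$ in the final bound.
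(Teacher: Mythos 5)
Your proposal is correct and follows essentially the same route as the paper's proof: splitting the product set, applying Talagrand's comparison to the $\vb$-term, and decoupling the $\va$-term via the orthogonal decomposition of $\mPhi$ along $\vx^{\star}$, with Bernstein for the diagonal/scalar pieces and Gaussian concentration (conditional on $\vz$) for the perpendicular piece. The only differences are immaterial choices of deviation levels in the Bernstein and Talagrand steps, which the paper absorbs afterwards using $\sigma\leq C(\psi+\mu)$ just as you do.
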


\begin{proof}
See Appendix \ref{ProofofUI}.
\end{proof}

\begin{lemma} \label{lem: large scale conditioning}
		Let $\KK= \TT-(\mu \vx^{\star},\vv^{\star})$ be a star shaped set and $t > 0$.
		Suppose that $m \geq C\cdot \omega_t(\KK)^2/t^2$.
		Then, the following lower bound
		\begin{align*}
		\|\mPhi\vh+\sqrt{m}\ve\|_2 \geq \frac{\sqrt{m}}{2} \sqrt{\|\vh\|_2^2 + \|\ve\|_2^2}
		\end{align*}
		holds for all $(\vh,\ve) \in \KK$ satisfying $\sqrt{\|\vh\|_2^2 + \|\ve\|_2^2} \geq t$ with probability at least $1-\exp \big(-\gamma(\KK\cap t\S^{n+m-1})^2/t^2\big)$.
\end{lemma}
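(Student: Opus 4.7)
The plan is to apply the Extended Matrix Deviation Inequality (Fact~\ref{Ext MDI}) at scale $t$ and then use the star-shapedness of $\KK$ to promote the resulting uniform lower bound on the sphere $t\S^{n+m-1}$ to every outer shell $\{(\vh,\ve)\in\KK:\sqrt{\|\vh\|_2^2+\|\ve\|_2^2}\geq t\}$. Because the rows of $\mPhi$ are standard Gaussian, the sub-Gaussian constant $K$ is universal, so the specialization \eqref{LowerBound2} (with $\TT$ replaced by $\KK$) gives, with probability at least $1-\exp\bigl(-\gamma(\KK\cap t\S^{n+m-1})^2/t^2\bigr)$,
\begin{equation*}
\inf_{(\va,\vb)\in\KK\cap t\S^{n+m-1}}\|\mPhi\va+\sqrt{m}\vb\|_2 \;\geq\; t\sqrt{m}-C\gamma\bigl(\KK\cap t\S^{n+m-1}\bigr),
\end{equation*}
which already matches the probability stated in the lemma.

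Next I would extend this spherical estimate to an arbitrary $(\vh,\ve)\in\KK$ with $r:=\sqrt{\|\vh\|_2^2+\|\ve\|_2^2}\geq t$. Since $(\mu\vx^{\star},\vv^{\star})\in\TT$ we have $0\in\KK$, so star-shapedness implies that the contraction $(t/r)(\vh,\ve)$ still lies in $\KK$; by construction it has Euclidean norm exactly $t$ and thus sits in $\KK\cap t\S^{n+m-1}$. Applying the preceding bound to this contracted point and then multiplying through by $r/t$ gives, by linearity of $\mPhi$,
\begin{equation*}
\|\mPhi\vh+\sqrt{m}\ve\|_2 \;\geq\; r\sqrt{m}-\frac{r}{t}\cdot C\gamma\bigl(\KK\cap t\S^{n+m-1}\bigr).
\end{equation*}
Consequently, the desired lower bound $\|\mPhi\vh+\sqrt{m}\ve\|_2\geq(\sqrt{m}/2)\,r$ reduces to verifying the single scalar condition $\sqrt{m}\geq 2C\,\gamma(\KK\cap t\S^{n+m-1})/t$.

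The final step is to convert this sufficient condition into the hypothesis $m\geq C\omega_t(\KK)^2/t^2$. Applying \eqref{Relation} with $\vy$ chosen as any point of $\KK\cap t\S^{n+m-1}$ and then using the containment $t\S^{n+m-1}\subset t\B_2^{n+m}$ gives
\begin{equation*}
\gamma\bigl(\KK\cap t\S^{n+m-1}\bigr) \;\leq\; 2\omega\bigl(\KK\cap t\S^{n+m-1}\bigr)+2t \;\leq\; 2\omega_t(\KK)+2t,
\end{equation*}
and in the non-degenerate regime $\omega_t(\KK)\gtrsim t$ the right-hand side is a universal-constant multiple of $\omega_t(\KK)$, so taking the constant in the hypothesis large enough closes the argument. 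The main obstacle, and really the only delicate point, is this last constant calibration: one must confirm that \eqref{Relation}, combined with $t\S^{n+m-1}\subset t\B_2^{n+m}$ and the star-shapedness of $\KK$, genuinely converts the Gaussian complexity on the spherical slice into the local Gaussian width on the ball slice with only a universal penalty, so that $m\geq C\omega_t(\KK)^2/t^2$ indeed absorbs the factor $\gamma(\KK\cap t\S^{n+m-1})^2/t^2$ appearing both in the deterministic bound and in the exponent of the probability.
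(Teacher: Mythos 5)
Your argument is correct and is essentially the paper's own proof: rescale an arbitrary point of the outer shell onto $\KK\cap t\S^{n+m-1}$ via star-shapedness, apply \eqref{LowerBound2} there, and convert $\gamma(\KK\cap t\S^{n+m-1})$ into $\omega_t(\KK)$ through \eqref{Relation}. The one ``delicate point'' you flag disappears if you order the last step as the paper does --- first use the inclusion $\KK\cap t\S^{n+m-1}\subset\KK\cap t\B_2^{n+m}$ to get $\gamma(\KK\cap t\S^{n+m-1})\le\gamma(\KK\cap t\B_2^{n+m})$, and only then apply \eqref{Relation} with anchor $\vy=\vzero\in\KK\cap t\B_2^{n+m}$, which yields $\gamma(\KK\cap t\S^{n+m-1})\le 2\omega_t(\KK)$ with no additive $t$ and hence no need for the regime $\omega_t(\KK)\gtrsim t$ (which in any case holds automatically whenever the outer shell is nonempty, since star-shapedness then places a segment of length $t$ inside $\KK\cap t\B_2^{n+m}$).
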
	
\begin{proof}
		Let $\lambda =\frac{t}{\sqrt{\|\vh\|_2^2 + \|\ve\|_2^2}}\leq 1$ and $(\vu,\vv)=\lambda\cdot(\vh,\ve)$. Then $(\vu,\vv)\in \lambda\KK\cap t S^{n+m-1}$. Thus we have
		\begin{align*}
		&\inf_{(\vh,\ve) \in \KK, ~\|(\vh,\ve)\|_2 \geq t} \frac{\|\mPhi \vh+\sqrt{m}\ve\|_2}{\sqrt{\|\vh\|_2^2 + \|\ve\|_2^2}} \\
		&\qquad\qquad= \inf_{(\vu,\vv)\in \lambda\KK\cap t S^{n+m-1}} \frac{\|\mPhi \vu+\sqrt{m}\vv\|_2}{t}\\
		&\qquad\qquad\geq \inf_{(\vu,\vv)\in \KK\cap t S^{n+m-1}} \frac{\|\mPhi \vu+\sqrt{m}\vv\|_2}{t}\\
		&\qquad\qquad \geq \sqrt{m} - C'{\gamma(\KK\cap t\S^{n+m-1})}/t\\
		&\qquad\qquad \geq \sqrt{m} - C''{\omega_t(\KK)}/t\\
		&\qquad\qquad  \geq \frac{{\sqrt m}}{2}
		\end{align*}
holds with probability at least $1-\exp \big(-\gamma(\KK\cap t\S^{n+m-1})^2/t^2\big)$. The first inequality holds because $\KK$ is star shaped, then $\lambda\KK\subset\KK$. The second inequality follows from \eqref{LowerBound2}. The third inequality holds because \eqref{Relation} and $\vzero\in \KK$, i.e.,
	\begin{align*}
	\gamma(\KK\cap t\S^{n+m-1})\leq\gamma(\KK\cap t\B_2^{n+m})
	\leq 2\omega(\KK\cap t\B_2^{n+m}).
	\end{align*}
The last inequality follows from the assumption on the number of measurements $m \geq C\cdot \omega_t(\KK)^2/t^2$.
\end{proof}

\subsection{Proof of Theorem \ref{them: Tangent_Cone}}

\begin{proof}
	For clarity, the proof is divided into three steps.

	\textbf{Step 1: Problem reduction.} Since $(\hat{\vx}, \hat{\vv})$ is the solution to the $\TT$-Lasso problem \eqref{T_Lasso} and $(\mu \vx^{\star},\vv^{\star}) \in \TT$, then we have
	\begin{align}\label{reduction1}
	\|\vy-\mPhi\hat{\vx}-\sqrt{m}\hat{\vv}\|_2 \le \|\vy-\mPhi\mu\vx^\star-\sqrt{m}\vv^\star\|_2.
	\end{align}
	Recall that $\vz=f(\mPhi\vx^\star)-\mPhi\mu\vx^\star$, then $\vy=\mPhi\mu\vx^\star+\sqrt{m}\vv^\star+\vz$. Let $\vh=\hat{\vx}-\mu\vx^\star$ and $\ve=\hat{\vv}-\vv^\star$. Then \eqref{reduction1} can be reformulated as
	\begin{align}\label{reduction2}
	\|\mPhi\vh+\sqrt{m}\ve-\vz\|_2 \le \|\vz\|_2.
	\end{align}
	Squaring both sides of \eqref{reduction2} yields
	\begin{align}\label{reduction3}
	\left\| {\mPhi \vh + \sqrt m \ve} \right\|_2^2 \le 2\ip{\mPhi \vh + \sqrt m \ve}{\vz}.
	\end{align}
	
	\textbf{Step 2: Lower Bound on $\| {\mPhi \vh + \sqrt m\ve}\|_2$.} Define the error set
     \begin{align*}
     \EE(\mu \vx^{\star},\vv^{\star}):&=\{(\vh,\ve)\in \R^n\times\R^m: (\mu \vx^{\star}+\vh,\vv^{\star}+\ve)\in \TT \}\\
     &=\TT-(\mu \vx^{\star},\vv^{\star}),
    \end{align*}
     in which the error vector $(\hat{\vx}-\mu \vx^{\star},\hat{\vv}-\vv^{\star})$ lives. Clearly, $\EE(\mu \vx^{\star},\vv^{\star})$ belongs to the tangent cone $\DD(\TT,(\mu \vx^{\star},\vv^{\star}))$.
     It then follows from \eqref{LowerBound} that the event
	\begin{align*}
	&\|\mPhi \vh+ \sqrt m \ve\|_2\\
	&\quad= \sqrt{\|\vh\|_2^2+\|\ve\|_2^2} \cdot \left\|\frac{\mPhi\vh}{\sqrt{\|\vh\|_2^2+\|\ve\|_2^2}}+\frac{\sqrt{m}\ve}{\sqrt{\|\vh\|_2^2+\|\ve\|_2^2}}\right\|_2  \\
	&\quad \geq \sqrt{\|\vh\|_2^2+\|\ve\|_2^2} \cdot (\sqrt{m} - C{\gamma(\DD\cap \S^{n+m-1})})\\
	&\quad \geq \sqrt{\|\vh\|_2^2+\|\ve\|_2^2} \cdot (\sqrt{m} - C_1{\omega_1(\DD)})\\
	&\quad \geq \frac{{\sqrt m}}{2} \sqrt{\|\vh\|_2^2+\|\ve\|_2^2}
	\end{align*}
	holds with probability at least $1-\exp \{-\gamma(\DD\cap \S^{n+m-1})^2\}$. The second inequality holds because \eqref{Relation} and $\vzero\in \DD$, namely
	\begin{align*}
	\gamma(\DD\cap \S^{n+m-1})\leq\gamma(\DD\cap \B_2^{n+m})
	\leq 2\omega(\DD\cap \B_2^{n+m}).
	\end{align*}
	The last inequality is due to \eqref{NumberofMeasurements1}.
	
	\textbf{Step 3: Upper Bound on $\ip{\mPhi \vh + \sqrt m\ve}{\vz}$.} It follows Lemma \ref{lemma: UI} that (by setting $t=1$) the event
	\begin{align*}
	&\ip{\mPhi \vh + \sqrt m\ve}{\vz}\\
	& = \sqrt{\|\vh\|_2^2+\|\ve\|_2^2} \ip{\frac{\mPhi\vh+\sqrt{m}\ve}{\sqrt{\|\vh\|_2^2+\|\ve\|_2^2}}}{\vz}\\
	%&\leq C \sqrt{\|\vh\|_2^2+\|\ve\|_2^2} \cdot\sqrt{m}\big[\omega(D\cap \S^{n+m-1})(\sigma+\psi+\mu)+s\sigma\big]\\
	%&\leq C \sqrt{\|\vh\|_2^2+\|\ve\|_2^2} \cdot\sqrt{m}\big[\omega(D\cap \B_2^{n+m})(\sigma+\psi+\mu)+s\sigma\big]\\
	&\leq C\sqrt{m} \sqrt{\|\vh\|_2^2+\|\ve\|_2^2} \cdot \big[\omega_1(\DD)(\sigma+\psi+\mu)+s\sigma\big]
	\end{align*}
	holds with probability at least $1-2\exp(-cs^2\sigma^4/{(\psi+\mu)^4})$.
	
	Putting everything together and taking union bound, we have that, with probability at least $1-2\exp(-cs^2\sigma^4/{(\psi+\mu)^4})-\exp \big(-\gamma(\DD\cap\S^{n+m-1})^2\big)$,
	\begin{align*}
	&\frac{m}{4}(\|\vh\|_2^2+\|\ve\|_2^2)\\
	&\qquad \leq C\sqrt{m} \sqrt{\|\vh\|_2^2+\|\ve\|_2^2} \cdot \big[\omega_1(\DD)(\sigma+\psi+\mu)+s\sigma\big].
	\end{align*}
	Rearranging completes the proof of Theorem \ref{them: Tangent_Cone}.
\end{proof}

\subsection{Proof of Theorem \ref{them: noTangent_Cone}}

\begin{proof}
First note that if $\sqrt{\|\vh\|_2^2 + \|\ve\|_2^2} \leq  t$, then Theorem \ref{them: noTangent_Cone} holds trivially. So it is sufficient to prove Theorem \ref{them: noTangent_Cone} under assumption $\sqrt{\|\vh\|_2^2 + \|\ve\|_2^2} \geq t$.

Similar to Step 1 of the proof of Theorem \ref{them: Tangent_Cone}, we have
	\begin{equation}\label{eq: h via ip}
	\left\| {\mPhi \vh + \sqrt m \ve} \right\|_2^2 \le 2\ip{\mPhi \vh + \sqrt m \ve}{\vz}.
	\end{equation}

Observe that the error vector $(\vh,\ve)$ belongs to a star shaped set, namely $\KK=\TT-(\mu \vx^{\star},\vv^{\star})$. It then follows from Lemma \ref{lem: large scale conditioning} that the following event
		\begin{align}\label{proof1}
		\|\mPhi\vh+\sqrt{m}\ve\|_2 \geq \frac{\sqrt{m}}{2} \sqrt{\|\vh\|_2^2 + \|\ve\|_2^2}
		\end{align}
holds with probability at least $1-\exp \big(-\gamma(\KK\cap t\S^{n+m-1})^2/t^2\big)$.

Combining \eqref{eq: h via ip} and \eqref{proof1} yields
	\begin{equation}\label{eq: hez}				
	\frac{m}{4}(\|\vh\|_2^2 + \|\ve\|_2^2) \leq 2\ip{\mPhi\vh+\sqrt{m}\ve}{\vz}.
	\end{equation}

	Note that $\sqrt{\|\vh\|_2^2 + \|\ve\|_2^2}\ge t$, we cannot use the upper bound in Lemma \ref{lemma: UI} directly. So dividing both sides of \eqref{eq: hez} by $m\delta=m \sqrt{\|\vh\|_2^2 + \|\ve\|_2^2}$, we obtain
	\begin{align*}
	\sqrt{\|\vh\|_2^2 + \|\ve\|_2^2} &\leq \frac{8}{m} \delta^{-1}\ip{\mPhi\vh+\sqrt{m}\ve}{\vz}\\
	&\leq \frac{8}{m} \sup_{(\vu,\vv) \in \delta^{-1} \KK\cap \B_2^{n+m}} \ip{\mPhi\vu+\sqrt{m}\vv}{\vz}\\
	&\leq \frac{8}{m} \sup_{(\vu,\vv) \in t^{-1}\KK\cap \B_2^{n+m}} \ip{\mPhi\vu+\sqrt{m}\vv}{\vz}\\
	&=\frac{8}{mt} \sup_{(\va,\vb) \in \KK\cap t\B_2^{n+m}} \ip{\mPhi\va+\sqrt{m}\vb}{\vz}\\
	&\leq\frac{C}{\sqrt m}\big[\frac{\omega_t(\KK)(\sigma+\psi+\mu)}{t} + s\sigma \big]
	\end{align*}
	holds with probability at least $1-2\exp(-cs^2\sigma^4/{(\psi+\mu)^4})-\exp \big(-\gamma(\KK\cap t\S^{n+m-1})^2/t^2\big)$. In the second inequality we set $(\vu,\vv)=\delta^{-1}(\vh,\ve)$. The third inequality holds due to $\KK$ is star shaped, namely $t\delta^{-1}\KK\subset\KK$ and hence $\delta^{-1}\KK\subset t^{-1}\KK$. In the fourth line we let $(\va,\vb)=t(\vu,\vv)$. The last inequality follows from Lemma \ref{lemma: UI}. Thus we complete the proof.
\end{proof}

%\section{Numerical Simulations}\label{Simulations}
\section{Conclusion}\label{Conclusion}
In this paper, we have analyzed performance guarantees for $\TT$-Lasso which is used to recover a structured signal from corrupted non-linear Gaussian measurements. The theoretical results may be of help in some practical applications such as dealing with saturation error in quantization which has been a challenge in the area of signal processing. As for future work, it is worthwhile to deduce the explicit expressions of the main results for different specific problems, and to consider penalized recovery procedures rather than a constrained one for computational purposes.
\appendices

\section{Proof of Lemma \ref{lemma: UI}}\label{ProofofUI}

\subsection{Auxiliary Definitions and Facts}

To prove Lemma \ref{lemma: UI}, we require some additional definitions and facts.

\begin{definition}[Sub-exponential random variable and vector]
A random variable $X$ is called a \emph{sub-exponential random variable} if the \emph{sub-exponential norm}
\begin{equation*}\label{Sub-exponential_Definition}
\|X\|_{\psi_1} = \inf\{t > 0: \E \exp(\left|X\right|/t) \leq 2\}
\end{equation*}
is finite. A random vector $\vx$ in $\R^n$ is called \emph{sub-exponential random vector} if all of its one-dimensional marginals are sub-exponential random variables. The \emph{sub-exponential norm} of $\vx$ is defined as
\begin{equation*}
\|\vx\|_{\psi_1}:=\sup_{\vy\in\S^{n-1}}\big\| \ip{\vx}{\vy} \big\|_{\psi_1}.
\end{equation*}
\end{definition}

\begin{fact}[Sub-Gaussian distributions with independent coordinates]\cite[Lemma 3.4.2]{Vershynin2018}
	\label{pro: sub_G vector}
	Let $X=(X_1, \ldots, X_n)^T\in\R^n$ be a random vector with independent, mean zero, sub-Gaussian coordinates $X_i$. Then $X$ is a sub-Gaussian random vector, and
	\begin{equation*}
	\|X\|_{\psi_2} \leq C\max_{i\leq n} \|X_i\|_{\psi_2}.
	\end{equation*}
\end{fact}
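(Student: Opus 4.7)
\textbf{Proof proposal for Fact \ref{pro: sub_G vector}.} The plan is to unwind the definition of the sub-Gaussian norm of a random vector, which reduces the statement to a concentration estimate for a one-dimensional sum of independent mean-zero sub-Gaussian random variables. By definition,
\[
\|X\|_{\psi_2} = \sup_{\vy \in \S^{n-1}} \bigl\| \ip{X}{\vy} \bigr\|_{\psi_2} = \sup_{\vy \in \S^{n-1}} \Bigl\| \sum_{i=1}^n y_i X_i \Bigr\|_{\psi_2},
\]
so it suffices to prove, uniformly over unit $\vy$, that $\bigl\|\sum_i y_i X_i\bigr\|_{\psi_2} \leq C \max_i \|X_i\|_{\psi_2}$. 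The summands $y_i X_i$ are independent, mean zero, and sub-Gaussian with $\|y_i X_i\|_{\psi_2} = |y_i| \cdot \|X_i\|_{\psi_2}$, so the task is to show that the $\psi_2$-norm of a sum of independent centered sub-Gaussians satisfies a Pythagorean-type bound, $\bigl\|\sum_i Y_i\bigr\|_{\psi_2}^2 \leq C \sum_i \|Y_i\|_{\psi_2}^2$.

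The key step is to move from the Orlicz definition of $\|\cdot\|_{\psi_2}$ to a moment generating function description. First I would invoke the standard equivalence: for a centered random variable $Y$, $\|Y\|_{\psi_2} \leq K$ implies $\E \exp(\lambda Y) \leq \exp(C_1 \lambda^2 K^2)$ for every $\lambda \in \R$, and conversely an MGF bound of this form implies $\|Y\|_{\psi_2} \leq C_2 K$ (both directions follow from Markov's inequality combined with the series expansion of the exponential, together with the centering assumption). Applying the forward direction to each centered sub-Gaussian $Y_i = y_i X_i$ and using independence,
\[
\E \exp\Bigl(\lambda \sum_i Y_i \Bigr) = \prod_{i=1}^n \E \exp(\lambda Y_i) \leq \prod_{i=1}^n \exp\bigl(C_1 \lambda^2 \|Y_i\|_{\psi_2}^2\bigr) = \exp\Bigl(C_1 \lambda^2 \sum_i \|Y_i\|_{\psi_2}^2\Bigr).
\]
The converse direction then yields $\bigl\|\sum_i Y_i\bigr\|_{\psi_2}^2 \leq C \sum_i \|Y_i\|_{\psi_2}^2$, as desired.

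Finally, plugging $Y_i = y_i X_i$ into this bound and using $\sum_i y_i^2 = 1$ gives
\[
\Bigl\| \sum_i y_i X_i \Bigr\|_{\psi_2}^2 \leq C \sum_i y_i^2 \|X_i\|_{\psi_2}^2 \leq C \max_{i \leq n} \|X_i\|_{\psi_2}^2,
\]
and taking the square root and the supremum over $\vy \in \S^{n-1}$ yields the claim. I do not anticipate a genuine obstacle; the only slightly delicate point is verifying the two-way equivalence between the Orlicz $\psi_2$-norm and the MGF bound with the correct role of the centering hypothesis. This is where the mean-zero assumption is genuinely needed, since without it the MGF of $Y_i$ carries a linear-in-$\lambda$ term that would spoil the product bound above.
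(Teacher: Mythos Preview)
Your argument is correct and is essentially the standard proof (the one given in the cited reference \cite[Lemma~3.4.2]{Vershynin2018}): reduce to a fixed unit direction, pass to the MGF characterization of centered sub-Gaussians, factor by independence, and use $\sum_i y_i^2=1$. Note, however, that the present paper does not supply its own proof of this fact; it is quoted verbatim as a black box from Vershynin's text, so there is no ``paper's proof'' to compare against beyond the citation itself.
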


\begin{fact}[Product of sub-Gaussian is sub-exponential]\cite[Lemma 2.7.7]{Vershynin2018} \label{Product of subgaus is subexp}
	Let $X$ and $Y$ be sub-Gaussian random variables (not necessarily independent). Then $XY$ is sub-exponential. Moreover,
	\begin{align*}
	\|XY\|_{\psi_1} \leq \|X\|_{\psi_2}\|Y\|_{\psi_2}.
	\end{align*}
\end{fact}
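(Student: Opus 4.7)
The plan is to verify directly from the defining Orlicz inequalities that $\E\exp(|XY|/t)\le 2$ with the specific choice $t=\|X\|_{\psi_2}\|Y\|_{\psi_2}$, since by the definition of $\|\cdot\|_{\psi_1}$ this is exactly what needs to be shown. By the positive homogeneity of both $\|\cdot\|_{\psi_1}$ and $\|\cdot\|_{\psi_2}$, I would first normalize: replace $X,Y$ by $X/\|X\|_{\psi_2}$ and $Y/\|Y\|_{\psi_2}$ (if either $\psi_2$-norm vanishes the corresponding variable is almost surely zero and the conclusion is trivial). After this reduction it suffices to prove $\E\exp(|XY|)\le 2$ under the assumption $\|X\|_{\psi_2}=\|Y\|_{\psi_2}=1$.

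The workhorse is Young's inequality $|ab|\le(a^2+b^2)/2$ applied pointwise, which gives
$$\exp(|XY|)\;\le\;\exp\!\Bigl(\tfrac{X^2}{2}\Bigr)\,\exp\!\Bigl(\tfrac{Y^2}{2}\Bigr).$$
Taking expectation and applying the Cauchy--Schwarz inequality yields
$$\E\exp(|XY|)\;\le\;\bigl(\E\exp(X^2)\bigr)^{1/2}\bigl(\E\exp(Y^2)\bigr)^{1/2}.$$
Each factor on the right is at most $\sqrt{2}$ by the definition of the sub-Gaussian norm evaluated at $t=1$ under our normalization, so the product is at most $2$. Undoing the rescaling by the two $\psi_2$-norms gives $\E\exp\!\bigl(|XY|/(\|X\|_{\psi_2}\|Y\|_{\psi_2})\bigr)\le 2$, which by definition is precisely the inequality $\|XY\|_{\psi_1}\le\|X\|_{\psi_2}\|Y\|_{\psi_2}$.

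There is essentially no obstacle: the argument is elementary and uses neither independence nor centering of $X$ and $Y$. Young's inequality is pointwise, and Cauchy--Schwarz needs no distributional assumption, which is the reason this fact is applicable in the proof of Lemma \ref{lemma: UI} where $\bar{\vy}_i$ and $\ip{\mPhi_i}{\vx^{\star}}$ are highly dependent. The one technical point that deserves a sentence is that the infimum in $\|X\|_{\psi_2}=\inf\{t>0:\E\exp(X^2/t^2)\le 2\}$ is attained, so that one may legitimately plug in $t=\|X\|_{\psi_2}$; this follows from monotone convergence applied to the continuous decreasing map $t\mapsto\E\exp(X^2/t^2)$.
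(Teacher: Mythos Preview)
Your proof is correct and essentially reproduces the argument of the cited reference \cite[Lemma~2.7.7]{Vershynin2018}, which the paper invokes without proof: normalize, apply Young's inequality pointwise, and then bound the expectation of the product. The only cosmetic difference is that Vershynin uses the AM--GM inequality $\sqrt{\exp(X^2)\exp(Y^2)}\le\tfrac12(\exp(X^2)+\exp(Y^2))$ in place of your Cauchy--Schwarz step, but both routes land on the same bound of $2$.
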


\begin{fact}[Centering]\cite[Lemma 2.6.8 and Exercise 2.7.10]{Vershynin2018}
	\label{pro: center}
	If $X$ is sub-Gaussian (or sub-exponential), then so is $X-\E X$. Moreover,
	\begin{equation*}
	\|X - \E X\|_{\psi_2} \leq C \|X\|_{\psi_2} ~~\textrm{and}~~\|X-\E X\|_{\psi_1} \leq C \|X\|_{\psi_1}.
	\end{equation*}
\end{fact}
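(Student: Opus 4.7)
\textbf{Proof proposal for Fact~\ref{pro: center} (Centering).}
The plan is to treat both $\|\cdot\|_{\psi_2}$ and $\|\cdot\|_{\psi_1}$ as Orlicz-type norms satisfying the triangle inequality, apply the triangle inequality to the decomposition $X-\E X = X + (-\E X)$, and then bound the $\psi_\alpha$-norm of the deterministic quantity $\E X$ in terms of $\|X\|_{\psi_\alpha}$.

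First I would verify (or simply invoke) that $\|\cdot\|_{\psi_2}$ is a norm on the space of random variables for which it is finite. This is standard: the function $\psi_2(u)=e^{u^2}-1$ is convex with $\psi_2(0)=0$, so the associated Luxemburg-type functional $\|X\|_{\psi_2}=\inf\{t>0:\E\psi_2(|X|/t)\leq 1\}$ satisfies subadditivity by a one-line convexity argument, and analogously for $\psi_1(u)=e^{|u|}-1$. Consequently
\begin{equation*}
\|X-\E X\|_{\psi_2}\leq \|X\|_{\psi_2}+\|\E X\|_{\psi_2},
\end{equation*}
and similarly in the $\psi_1$ case. It therefore suffices to show $\|\E X\|_{\psi_\alpha}\leq C\|X\|_{\psi_\alpha}$ for $\alpha\in\{1,2\}$.

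For the sub-Gaussian case, observe that a deterministic scalar $c$ satisfies $\E\exp(c^2/t^2)=\exp(c^2/t^2)\leq 2$ iff $t\geq |c|/\sqrt{\ln 2}$, so $\|c\|_{\psi_2}=|c|/\sqrt{\ln 2}$. Setting $t_\star=\|X\|_{\psi_2}$ and applying Jensen's inequality to the convex map $u\mapsto e^{u^2/t_\star^2}$, we get $\exp(\E X^2/t_\star^2)\leq \E\exp(X^2/t_\star^2)\leq 2$, whence $(\E X)^2\leq \E X^2\leq(\ln 2)\,t_\star^2$. This yields $\|\E X\|_{\psi_2}=|\E X|/\sqrt{\ln 2}\leq \|X\|_{\psi_2}$, and combining with the triangle inequality gives $\|X-\E X\|_{\psi_2}\leq 2\|X\|_{\psi_2}$. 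For the sub-exponential case the argument is parallel: $\|c\|_{\psi_1}=|c|/\ln 2$, and from $\E\exp(|X|/\|X\|_{\psi_1})\leq 2$ Jensen yields $\E|X|\leq (\ln 2)\|X\|_{\psi_1}$, so that $|\E X|\leq \E|X|\leq (\ln 2)\|X\|_{\psi_1}$ and hence $\|\E X\|_{\psi_1}\leq\|X\|_{\psi_1}$.

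The only non-routine step is the subadditivity of the two norms, which hinges on convexity of $\psi_\alpha$; everything else reduces to Jensen's inequality and the explicit computation of $\|c\|_{\psi_\alpha}$ for a constant $c$. In both cases the resulting absolute constant can be taken to be $C=2$.
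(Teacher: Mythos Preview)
The paper does not supply its own proof of this fact; it is simply cited from Vershynin's textbook (Lemma~2.6.8 and Exercise~2.7.10). Your argument is correct and is precisely the standard one found in that reference: invoke the triangle inequality for the Orlicz norm $\|\cdot\|_{\psi_\alpha}$ and then control $\|\E X\|_{\psi_\alpha}$ by $\|X\|_{\psi_\alpha}$ via Jensen's inequality, arriving at $C=2$.
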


\begin{fact}[Bernstein-type inequality] \cite[Theorem 2.8.2]{Vershynin2018}
	\label{Bernstein ineq}
	Let $X_1, X_2, \ldots, X_m $ be independent, mean-zero, sub-exponential random variables, and $\va = (a_1, a_2, \ldots, a_m)^{T}\in\R^{m}$. Then, for any $t \geq 0$, we have
	\begin{align*}
	&\Pr{ \left| \sum_{i=1}^{m} a_{i} X_{i} \right| \geq t }\\
	&\qquad \leq 2 \exp \left\{ -c \min  \left( \frac{t^2}{K^2 \|\va\|_2^2}, \frac{t}{K\|\va\|_{\infty} } \right)\right\},
	\end{align*}
	where $K = \max_{i} \|X_{i}\|_{\psi_1}$.
\end{fact}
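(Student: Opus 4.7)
The plan is the standard Chernoff argument for sub-exponential sums. The engine is the moment generating function (MGF) bound: if $X$ is centered with $\|X\|_{\psi_1}\le K$, then there exist absolute constants $c_1,c_2>0$ with
\begin{equation*}
\E \exp(\lambda X)\le \exp(c_2K^2\lambda^2) \qquad\text{for all } |\lambda|\le c_1/K.
\end{equation*}
I would derive this by Taylor-expanding $\exp(\lambda X)$ and using the standard moment growth $\E|X|^p\le (CKp)^p$ for sub-exponentials (a direct consequence of the $\psi_1$-norm definition) to dominate the tail of the series by a geometric one in the regime $|\lambda|\le c_1/K$; the mean-zero hypothesis eliminates the linear term that would otherwise spoil the quadratic bound.

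Applying this to each $a_iX_i$ (noting $\|a_iX_i\|_{\psi_1}\le|a_i|K$) and using independence to tensorize the MGFs,
\begin{equation*}
\E \exp\!\left(\lambda \sum_{i=1}^m a_iX_i\right)\le \exp\!\left(c_2 K^2\lambda^2 \|\va\|_2^2\right), \qquad |\lambda|\le \frac{c_1}{K\|\va\|_\infty}.
\end{equation*}
Markov's inequality then gives $\Pr{\sum_{i=1}^m a_iX_i\ge t}\le \exp\!\left(-\lambda t+c_2K^2\lambda^2\|\va\|_2^2\right)$, and the last step is to optimize $\lambda$ over the feasible interval $[0,c_1/(K\|\va\|_\infty)]$. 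The unconstrained minimizer is $\lambda^\star=t/(2c_2K^2\|\va\|_2^2)$, so two cases arise. When $\lambda^\star\le c_1/(K\|\va\|_\infty)$, equivalently $t\lesssim K\|\va\|_2^2/\|\va\|_\infty$, substituting $\lambda^\star$ yields the sub-Gaussian tail $\exp\!\left(-ct^2/(K^2\|\va\|_2^2)\right)$; otherwise we take $\lambda = c_1/(K\|\va\|_\infty)$ at the boundary, and the quadratic correction $c_2K^2\lambda^2\|\va\|_2^2$ is at most half of $\lambda t$ (precisely by the regime inequality, after adjusting constants), leaving the exponential tail $\exp\!\left(-ct/(K\|\va\|_\infty)\right)$. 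Combining the two cases yields the $\min$ form in the statement, and applying the same argument to $-X_i$ with a union bound supplies the absolute value and the factor $2$.

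The main technical point is to produce the MGF bound with an honest, centered, sub-exponential $X$ (rather than symmetric or bounded), and to track the absolute constants in $c_1,c_2$ so that the two regimes patch together into a single $\min(\cdot,\cdot)$ without loss. Beyond that the argument is a textbook Chernoff derivation, and indeed the exact statement is recorded as Theorem 2.8.2 in Vershynin's book cited alongside the fact.
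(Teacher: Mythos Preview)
Your plan is correct and is precisely the standard Chernoff-method proof of this Bernstein-type inequality; note, however, that the paper does not prove this statement at all---it is recorded as a \emph{Fact} with a citation to \cite[Theorem 2.8.2]{Vershynin2018} and used as a black box. So there is nothing to compare against: you have supplied the textbook derivation that the paper simply imports.
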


\begin{fact}[Gaussian concentration]\cite[Theorem 5.2.2]{Vershynin2018}
	\label{gaussian concentration}
	Consider a random vector $X\sim \NN(0,\mI_n)$ and a Lipschitz function $f:~\R^n\to\R$ with Lipschitz norm $\|f\|_{\textrm{Lip}}$ (with respect to the Euclidean metric). Then for any $t \geq 0$, we have
	\begin{equation*}
     \Pr{ |f(X) - \E f(X)| \geq t} \leq 2 \exp\left(\frac{-ct^2}{\|f\|_{\textrm{Lip}}^2}\right).
	\end{equation*}
\end{fact}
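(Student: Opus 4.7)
The plan is to prove the Gaussian concentration inequality via the classical rotational-interpolation (Maurey--Pisier ``smart path'') argument, which controls the moment-generating function of $f(X)-\E f(X)$ and then concludes by a Chernoff bound. I would begin by reducing to the case of a smooth $f$: convolving $f$ with a Gaussian mollifier of bandwidth $\varepsilon>0$ produces a $C^\infty$ function $f_\varepsilon$ that is still $L$-Lipschitz (with $L:=\|f\|_{\mathrm{Lip}}$) and converges to $f$ uniformly on compact sets as $\varepsilon\to 0$, so a concentration estimate for $f_\varepsilon$ transfers to $f$ by dominated convergence. This lets me assume $f$ is smooth with $\|\nabla f(x)\|_2\leq L$ pointwise; the finiteness of $\E f(X)$ is automatic from the Lipschitz property and the finite Gaussian moments of $X$.

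Next I would introduce a rotational interpolation between $X$ and an independent copy $Y\sim\NN(0,\mI_n)$. For $\theta\in[0,\pi/2]$, define $X_\theta:=\sin(\theta)X+\cos(\theta)Y$ and $X_\theta':=\cos(\theta)X-\sin(\theta)Y$. The crucial observation is that the map $(X,Y)\mapsto(X_\theta,X_\theta')$ is orthogonal on $\R^{2n}$, so at every fixed $\theta$ the vectors $X_\theta$ and $X_\theta'$ are \emph{independent} $\NN(0,\mI_n)$ random vectors. The fundamental theorem of calculus then yields
\begin{equation*}
f(X)-f(Y)=\int_0^{\pi/2}\ip{\nabla f(X_\theta)}{X_\theta'}\,d\theta.
\end{equation*}

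I would then control the moment-generating function. For any $\lambda\in\R$, Jensen's inequality applied to the uniform probability measure $\tfrac{2}{\pi}d\theta$ on $[0,\pi/2]$ bounds $\exp(\lambda(f(X)-f(Y)))$ by $\tfrac{2}{\pi}\int_0^{\pi/2}\exp(\tfrac{\pi\lambda}{2}\ip{\nabla f(X_\theta)}{X_\theta'})\,d\theta$. Taking expectations, conditioning on $X_\theta$, and invoking the standard Gaussian MGF together with $\|\nabla f(X_\theta)\|_2\leq L$ give $\E\exp(\lambda(f(X)-f(Y)))\leq\exp(\pi^2\lambda^2 L^2/8)$. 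A symmetrization via Jensen applied to the independent copy $Y$ produces $\E\exp(\lambda(f(X)-\E f(X)))\leq\E\exp(\lambda(f(X)-f(Y)))$, so $f(X)$ is sub-Gaussian with parameter $\pi L/2$. Optimizing the resulting Chernoff bound over $\lambda$ delivers the one-sided tail $\exp(-2t^2/(\pi^2 L^2))$, and a union bound on $\pm(f(X)-\E f(X))$ yields the claimed two-sided inequality with absolute constant $c=2/\pi^2$.

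The main conceptual hurdle, and the defining feature that distinguishes Gaussian concentration from concentration on more general metric--probability spaces, is the rotation invariance of $\NN(0,\mI_n)$: this is precisely what makes $X_\theta$ and $X_\theta'$ independent at each $\theta$. Without this independence I could not extract an unconditional Gaussian MGF bound from $\ip{\nabla f(X_\theta)}{X_\theta'}$, and the entire Chernoff chain would collapse. The mollification step and the pointwise bound $\|\nabla f\|_2\leq L$ after smoothing are routine but must be handled carefully so that the Lipschitz constant is not inflated; the rest of the argument is then an exercise in Jensen's inequality plus the Gaussian MGF.
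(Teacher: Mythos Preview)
Your argument is correct and is precisely the standard Maurey--Pisier rotational-interpolation proof of Gaussian Lipschitz concentration; each step (mollification, the independence of $X_\theta$ and $X_\theta'$ via rotation invariance, the Jensen step on the angular integral, the Gaussian MGF bound, and the Chernoff optimization yielding $c=2/\pi^2$) is sound.

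Note, however, that the paper does not supply its own proof of this fact: it is quoted verbatim as a cited result from \cite[Theorem~5.2.2]{Vershynin2018} and used as a black-box tool in the proof of Lemma~\ref{ip_bound}. So there is no ``paper's proof'' to compare against; your write-up simply fills in what the paper outsources to the reference. Incidentally, the proof in Vershynin's book follows essentially the same interpolation route you outline, so you are not taking a genuinely different path from the cited source either.
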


\begin{fact}[Talagrand's Majorizing Measure Theorem] \cite[Theorem 2.2.27]{talagrand2014upper} or \cite[Theorem 8]{Liaw2017A}
	\label{Talagrand Them}
	Let $( X_{\vu} )_{\vu\in \SS}$ be a random process indexed by points in a bounded set $\SS \subset \R^{n}$. Assume that the process has sub-Gaussian increments, that is, there exists $M \geq 0$ such that
	\begin{equation*}
	\| X_{\vu} - X_{\vv} \|_{\psi_2} \leq M \|\vu-\vv\|_2 ~~~~ \text{for every} ~~ \vu,\vv \in \SS.
	\end{equation*}
	Then,
%	\begin{equation*}\label{Expectation_Bound}
%	\E \sup_{\vu,\vv \in \TT} \big| X_{\vu} - X_{\vv} \big| \leq C M \omega(\TT).
%	\end{equation*}
	for any $s\geq 0$, the event
	\begin{equation*}\label{High_Probability_Bound}
	\sup_{\vu, \vv \in \SS} \big|X_{\vu} - X_{\vv}\big| \leq CM \big[ \omega(\SS) + s \cdot \diam(\SS) \big]
	\end{equation*}
	holds with probability at least $1- \exp(-s^2)$, where $\diam(\SS) := \sup_{\vx,\vy\in \SS}\|\vx-\vy\|_2$ denotes the diameter of $\SS$.
\end{fact}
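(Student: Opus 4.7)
The plan is to prove Fact \ref{Talagrand Them} by combining Talagrand's generic chaining tail inequality with the classical Fernique-Talagrand majorizing measure theorem. Generic chaining produces a high-probability supremum bound for any sub-Gaussian process on $(\SS,\|\cdot\|_2)$ of the form ``$\gamma_2$ plus $s\cdot\diam$''; the majorizing measure theorem then identifies the chaining functional $\gamma_2(\SS,\|\cdot\|_2)$ with the Gaussian width $\omega(\SS)$ up to universal constants. By the triangle inequality $\sup_{\vu,\vv\in\SS}|X_\vu-X_\vv|\leq 2\sup_{\vu\in\SS}|X_\vu-X_{\vu_0}|$ for any fixed $\vu_0\in\SS$, it suffices to control the one-sided quantity, absorbing the factor $2$ into $C$.

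First I would introduce an admissible sequence of partitions $(\mathcal{A}_n)_{n\geq 0}$ of $\SS$ with $|\mathcal{A}_0|=1$ and $|\mathcal{A}_n|\leq 2^{2^n}$, nearly optimal for
$$\gamma_2(\SS,\|\cdot\|_2):=\inf\sup_{\vu\in\SS}\sum_{n\geq 0} 2^{n/2}\diam(A_n(\vu)),$$
where $A_n(\vu)\in\mathcal{A}_n$ is the cell containing $\vu$. For each $n$ and $\vu$ I pick a representative $\pi_n(\vu)\in A_n(\vu)$ with $\pi_0(\vu)=\vu_0$, and write the telescoping chain
$$X_\vu-X_{\vu_0}=\sum_{n\geq 1}\bigl(X_{\pi_n(\vu)}-X_{\pi_{n-1}(\vu)}\bigr).$$
The sub-Gaussian increment hypothesis yields $\Pr{|X_\vu-X_\vv|\geq \lambda M\|\vu-\vv\|_2}\leq 2\exp(-c\lambda^2)$ for any $\lambda>0$. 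At level $n$ there are at most $|\mathcal{A}_n|\cdot|\mathcal{A}_{n-1}|\leq 2^{2^{n+1}}$ distinct endpoint pairs, and setting the threshold $\lambda_n=C(2^{n/2}+s)$ with $C$ large enough that $cC^2\cdot 2^n$ dominates the entropy $2^{n+1}\ln 2$, a union bound over all levels and all pairs produces an event of probability at least $1-\exp(-s^2)$ on which, simultaneously for every $\vu$ and every $n\geq 1$,
$$|X_{\pi_n(\vu)}-X_{\pi_{n-1}(\vu)}|\leq CM(2^{n/2}+s)\diam(A_{n-1}(\vu)).$$

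Summing in $n$ and taking $\sup_\vu$ then yields, on this event,
$$\sup_{\vu\in\SS}|X_\vu-X_{\vu_0}|\leq CM\gamma_2(\SS,\|\cdot\|_2) + CMs\diam(\SS),$$
where the first piece follows directly from the definition of $\gamma_2$ via the chosen partitions, and the second uses that a nearly-optimal admissible sequence can be refined so that cell diameters decay geometrically, $\diam(A_n(\vu))\leq C 2^{-n/2}\diam(\SS)$, making $\sum_{n\geq 1}\diam(A_{n-1}(\vu))$ a convergent geometric series bounded by $C\diam(\SS)$. The final substitution is the classical Fernique-Talagrand majorizing measure theorem, $\gamma_2(\SS,\|\cdot\|_2)\leq C\omega(\SS)$, which converts the chaining bound into the claim.

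The principal obstacle is this last step: the inequality $\gamma_2(\SS,\|\cdot\|_2)\leq C\omega(\SS)$ is the deep ``lower bound'' direction of the classical MMT, proved by Talagrand through an intricate partitioning argument that I would import as a black box — without it, only $\gamma_2$ (not $\omega$) would appear on the right. A secondary subtlety is calibrating the constant $C$ in $\lambda_n=C(2^{n/2}+s)$ so that the double series $\sum_{n\geq 1} 2^{2^{n+1}}\exp(-cC^2(2^n+s^2))$ collapses to a single $\exp(-s^2)$ uniformly in $s\geq 0$; this reduces to verifying $cC^2>2\ln 2$, after which the residual geometric series converges. One must also verify that a $\gamma_2$-optimal admissible sequence can be refined to have the geometric diameter-decay property without inflating $\gamma_2$ by more than a universal factor, a standard but not entirely trivial consequence of dyadic net constructions.
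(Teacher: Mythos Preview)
The paper does not supply a proof of this fact; it is quoted as a black box from \cite[Theorem~2.2.27]{talagrand2014upper} and \cite[Theorem~8]{Liaw2017A}. Your overall plan --- a generic-chaining tail bound producing $\gamma_2+s\cdot\diam$, followed by the hard direction of the majorizing measure theorem to convert $\gamma_2(\SS,\|\cdot\|_2)$ into $\omega(\SS)$ --- is precisely the route taken in those references.

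That said, your extraction of the $s\cdot\diam(\SS)$ term has a genuine gap. The assertion that a near-optimal admissible sequence can be refined so that $\diam(A_n(\vu))\leq C\,2^{-n/2}\diam(\SS)$ while keeping $\gamma_2$ comparable is false in general: for $\SS=\B_2^d$ with large $d$, reaching diameter $2^{-n/2}\diam(\SS)$ requires on the order of $2^{dn/2}$ cells, which exceeds the admissibility budget $2^{2^n}$ for every $n\lesssim\log_2 d$. What a near-optimal sequence \emph{does} guarantee is $\diam(A_n(\vu))\leq 2\gamma_2(\SS)\,2^{-n/2}$ (since each term of the $\gamma_2$ sum is bounded by the sum), so your series $\sum_{n\geq 1}s\,\diam(A_{n-1}(\vu))$ is controlled only by $Cs\,\gamma_2(\SS)$, giving the weaker conclusion $CM(1+s)\gamma_2$ rather than $CM(\gamma_2+s\,\diam)$. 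The standard repair is not to refine the partition but to start the chain at the level $n_0$ with $2^{n_0}\asymp s^2$: bound the single jump $|X_{\pi_{n_0}(\vu)}-X_{\vu_0}|$ by a union bound over the at most $2^{2^{n_0}}$ points of $\mathcal{A}_{n_0}$ with threshold $L\,2^{n_0/2}M\,\diam(\SS)\asymp Ls\,M\,\diam(\SS)$, and then chain only for $n>n_0$ with $\lambda_n=L\,2^{n/2}$, for which every level already has failure probability at most $\exp(-c\,2^n)\leq\exp(-cs^2)$.
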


\subsection{Proof of Lemma \ref{lemma: UI}}

We are now in position to prove Lemma \ref{lemma: UI}. Observe that
$$\sup_{(\va,\vb) \in \KK^t} \ip{\mPhi\va + \sqrt{m}\vb}{\vz} \leq \sup_{\va \in \KK_{\va}^t} \ip{\mPhi\va}{\vz} + \sqrt{m} \sup_{\vb\in \KK_{\vb}^t} {\ip{\vb}{\vz}}.$$
So it suffices to bound the two terms on the right side. To this end, we have the following two lemmas.

\begin{lemma}\label{ip_bound}
	Under the settings of Lemma \ref{lemma: UI}, then for any $0<s \leq \sqrt{m}$, the event
	\begin{align*}
	\sup_{\va \in \KK_{\va}^t} \ip{\mPhi\va}{\vz} \leq C\sqrt{m}\big[\omega(\KK_{\va}^t)\sigma+st\sigma\big]
	\end{align*}
	holds with probability at least $1-2\exp(-{cs^2\sigma^4}/{(\psi+\mu)^4})$.
\end{lemma}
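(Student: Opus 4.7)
The plan is to leverage the Gaussianity of $\mPhi$ to decouple the $\vx^\star$-direction (along which $\vz$ carries its entire dependence on $\mPhi$) from its orthogonal complement. Writing any $\va \in \KK_{\va}^t$ as $\va = \alpha \vx^\star + \va_\perp$ with $\alpha = \ip{\va}{\vx^\star}$ and $\va_\perp \perp \vx^\star$, I would split
\begin{align*}
\ip{\mPhi\va}{\vz} = \alpha\, A + B_\va, \quad A := \ip{\mPhi\vx^\star}{\vz}, \quad B_\va := \ip{\mPhi\va_\perp}{\vz},
\end{align*}
and note that since $\vz$ depends on $\mPhi$ only through $\mPhi\vx^\star$, the Gaussian vector $\mPhi\va_\perp$ is independent of $\vz$. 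Using $|\alpha| \le \|\va\|_2 \le t$, it then suffices to control $|A|$ and $\sup_{\va \in \KK_\va^t} B_\va$ separately.

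For the scalar $A = \sum_{i=1}^m g_i z_i$ with $g_i := \ip{\mPhi_i}{\vx^\star}$, each product $g_i z_i = g_i (f_i(g_i) - \mu g_i)$ is mean zero (by the very definition of $\mu$ and $\E g_i^2 = 1$) and sub-exponential with norm at most $\|g_i\|_{\psi_2}\|z_i\|_{\psi_2} \le C(\psi + \mu)$ by Facts \ref{Product of subgaus is subexp} and \ref{pro: center}. Bernstein's inequality (Fact \ref{Bernstein ineq}) applied at level $u = C\sqrt{m}\,s\sigma$ then yields $|A| \le C\sqrt{m}\,s\sigma$ with failure probability at most $2\exp\bigl(-c s^2 \sigma^2/(\psi+\mu)^2\bigr)$, and since $\sigma \lesssim \psi + \mu$ this is in turn bounded by $2\exp\bigl(-c s^2 \sigma^4/(\psi+\mu)^4\bigr)$; multiplying by $|\alpha|\le t$ produces the $s t \sigma$ summand in the target bound.

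To control $B_\va$, I would first concentrate $\|\vz\|_2$: applying Bernstein to $\sum_{i=1}^m (z_i^2 - \sigma^2)$, whose summands are sub-exponential of norm at most $C(\psi+\mu)^2$, gives $\|\vz\|_2 \le \sqrt{2m}\,\sigma$ with probability at least $1 - 2\exp\bigl(-cm\sigma^4/(\psi+\mu)^4\bigr) \ge 1 - 2\exp\bigl(-cs^2\sigma^4/(\psi+\mu)^4\bigr)$ in the stipulated range $s \le \sqrt m$. Conditioned on $\vz$, the increments $B_\va - B_{\va'} = \ip{\mPhi(\va_\perp - \va'_\perp)}{\vz}$ are Gaussian of variance $\|\vz\|_2^2 \|\va_\perp - \va'_\perp\|_2^2 \le \|\vz\|_2^2 \|\va - \va'\|_2^2$, so Talagrand's majorizing measure theorem (Fact \ref{Talagrand Them}), combined with $B_0 = 0$ (since $0 \in \KK_\va^t$) and $\diam(\KK_\va^t) \le 2t$, gives
\begin{align*}
\sup_{\va \in \KK_\va^t} B_\va \le C\|\vz\|_2 \bigl[\omega(\KK_\va^t) + st\bigr] \le C\sqrt{m}\,\sigma\bigl[\omega(\KK_\va^t) + st\bigr]
\end{align*}
with conditional probability at least $1 - \exp(-s^2)$. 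A union bound over the three events, each of failure probability at most $2\exp\bigl(-cs^2\sigma^4/(\psi+\mu)^4\bigr)$ (noting $s^2 \ge s^2\sigma^4/(\psi+\mu)^4$), assembles the claimed inequality.

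The main obstacle is the statistical dependence between $\mPhi$ and $\vz$, which prevents a direct conditional-Gaussian argument; the $\vx^\star$-split is precisely what restores independence on the orthogonal complement and makes Talagrand available. A secondary bookkeeping issue is aligning the Bernstein tails for both $A$ and $\|\vz\|_2$ with the precise target probability $\exp\bigl(-cs^2\sigma^4/(\psi+\mu)^4\bigr)$; this is what forces the ratio $\sigma/(\psi+\mu)$ into the exponent and relies on the standing restriction $s \le \sqrt m$.
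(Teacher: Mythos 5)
Your proposal is correct and follows essentially the same route as the paper: the identical orthogonal split along $\vx^{\star}$ to restore independence, a Bernstein bound on $\sum_i \vz_i\ip{\mPhi_i}{\vx^{\star}}$ for the aligned part, and a Bernstein concentration of $\|\vz\|_2$ followed by a conditional bound on the Gaussian supremum for the orthogonal part. The only (cosmetic) difference is that you invoke Talagrand's majorizing measure theorem for $\sup_{\va} B_{\va}$, whereas the paper uses Gaussian concentration of the Lipschitz map $\vg \mapsto \sup_{\va}\ip{\mP^{\perp}\vg}{\va}$ together with a Jensen argument to bound its mean by $\omega(\KK_{\va}^t)$; both yield the same $\omega(\KK_{\va}^t)+st$ term with comparable failure probability.
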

\begin{proof}
See Appendix \ref{proof of ip_bound}.
\end{proof}

\begin{lemma}\label{sub-G}
	Under the settings of Lemma \ref{lemma: UI}, the event
	\begin{align*}
	\sup_{\vb\in \KK_{\vb}^t}  \ip{\vb}{\vz} &\leq C[(\psi+\mu)\omega(\TT_{\vb})+st\sigma]
	\end{align*}
	holds with probability at least $1-\exp(-\frac{s^2\sigma^2}{(\psi+\mu)^2})$.
\end{lemma}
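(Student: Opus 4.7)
The plan is to view $\sup_{\vb \in \KK_{\vb}^t} \ip{\vb}{\vz}$ as the supremum of the centered sub-Gaussian linear process $X_\vb := \ip{\vb}{\vz}$ indexed by $\vb$ in the bounded star-shaped set $\KK_{\vb}^t \subseteq t\B_2^m$, and then apply Talagrand's majorizing measure theorem (Fact \ref{Talagrand Them}) with a rescaled tail parameter so that the $\sigma$ factor appears on the deviation term instead of the expected $\psi+\mu$.

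First I would establish that the coordinates $\vz_i = \bar{\vy}_i - \mu\ip{\mPhi_i}{\vx^{\star}}$ are independent, mean zero, and sub-Gaussian with $\|\vz_i\|_{\psi_2} \leq C(\psi+\mu)$: the term $\bar{\vy}_i$ is centered sub-Gaussian with norm $\psi$ by assumption, while $\mu\ip{\mPhi_i}{\vx^\star}$ is Gaussian with $\psi_2$-norm of order $\mu$, and $\E \vz_i = 0$. Independence across $i$ follows from the independence of the rows $\mPhi_i$ and the copies $f_i$. Fact \ref{pro: sub_G vector} then gives $\|\vz\|_{\psi_2} \leq C(\psi+\mu)$ as a random vector in $\R^m$. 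For the increments of $X_\vb$,
\begin{align*}
\|X_\vb - X_{\vb'}\|_{\psi_2} = \|\ip{\vb - \vb'}{\vz}\|_{\psi_2} \leq \|\vz\|_{\psi_2}\,\|\vb - \vb'\|_2 \leq C(\psi+\mu)\|\vb - \vb'\|_2,
\end{align*}
so the process has sub-Gaussian increments with constant $M = C(\psi+\mu)$. Since $\KK_{\vb}^t \subseteq t\B_2^m$, we have $\diam(\KK_{\vb}^t) \leq 2t$. Applying Fact \ref{Talagrand Them} with a free tail parameter $u \geq 0$ gives, with probability at least $1 - \exp(-u^2)$,
\begin{align*}
\sup_{\vb, \vb' \in \KK_{\vb}^t} |X_\vb - X_{\vb'}| \leq C(\psi+\mu)\bigl[\omega(\KK_{\vb}^t) + 2tu\bigr].
\end{align*}

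Finally, since $\KK$ is star-shaped and contains the origin, $\vzero \in \KK_{\vb}^t$ and $X_\vzero = 0$, so $\sup_\vb \ip{\vb}{\vz} \leq \sup_{\vb,\vb'} |X_\vb - X_{\vb'}|$. The key step is to choose $u = s\sigma/(\psi+\mu)$, which trades one factor of $(\psi+\mu)$ in the deviation term for a $\sigma$ and simultaneously sharpens the tail exponent, yielding
\begin{align*}
\sup_{\vb \in \KK_{\vb}^t} \ip{\vb}{\vz} \leq C\bigl[(\psi+\mu)\omega(\KK_{\vb}^t) + st\sigma\bigr]
\end{align*}
with probability at least $1 - \exp(-s^2\sigma^2/(\psi+\mu)^2)$, as required.

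The only substantive point is this rescaling of the tail parameter: Talagrand's theorem provides a single sub-Gaussian constant $M = \psi + \mu$ that would naively govern both the Gaussian-width term and the deviation term, so to surface the sharper scale $\sigma$ on the tail one must tighten the probability from $\exp(-s^2)$ to $\exp(-s^2\sigma^2/(\psi+\mu)^2)$. Aside from this scaling trick, every ingredient is a direct application of facts already recorded in the appendix, so I do not expect any real obstacle.
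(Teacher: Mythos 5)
Your proposal is correct and follows essentially the same route as the paper: bound $\|\vz\|_{\psi_2}\le C(\psi+\mu)$ coordinate-wise via Fact \ref{pro: sub_G vector}, verify sub-Gaussian increments of $X_{\vb}=\ip{\vb}{\vz}$, apply Talagrand's majorizing measure theorem with $\diam(\KK_{\vb}^t)\le 2t$, and set $u=s\sigma/(\psi+\mu)$ to trade a factor of $\psi+\mu$ on the deviation term for $\sigma$ while tightening the tail probability. No gaps; the rescaling trick you highlight is exactly the step the paper uses.
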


\begin{proof}
	Note that $\vz_i$ are i.i.d. centered sub-Gaussian variables with $\psi_2$-norm
	\begin{align}\label{z_sub_G}
	  K & := \|\vz_i\|_{\psi_2}  = \|f_i(\ip{\mPhi_i}{\vx^{\star}})-\mu\ip{\mPhi_i}{\vx^{\star}}\|_{\psi_2}  \\ \notag
      & \leq \psi+ C_1 \mu \leq C_2(\psi+\mu).
	\end{align}
    Then by Fact \ref{pro: sub_G vector}, $\vz$ is a sub-Gaussian random vector with
    $$\|\vz\|_{\psi_2} \leq  C_3 (\psi+\mu).$$
    Define the random process $X_{\vb}:=\ip{\vb}{\vz}$, which has sub-Gaussian increments:
	\begin{align*}
	\|X_{\vb} - X_{\vb'}\|_{\psi_2}
	&= \left\|\ip{\vz}{\vb-\vb'}\right\|_{\psi_2} \\
	&= \|\vb-\vb'\|_2 \cdot \left\|\ip{\vz}{\frac{\vb-\vb'}{\|\vb-\vb'\|_2}}\right\|_{\psi_2} \\
	&\le C_3(\psi+\mu)\|\vb-\vb'\|_2.
    \end{align*}
    Note that $\vzero \in \KK_{\vb}^t$, it then follows from Talagrand's Majorizing Measure Theorem (Fact \ref{Talagrand Them}) that the event
 	\begin{align*}
	\sup_{\vb\in \KK_{\vb}^t} \ip{\vb}{\vz} &\leq \sup_{\vb\in \KK_{\vb}^t} \left|\ip{\vb}{\vz}\right|\\
	 &\leq C_4(\psi+\mu)\big[\omega(\KK_{\vb}^t) + u\cdot \diam(\KK_{\vb}^t)\big]\\
     &\leq C_5(\psi+\mu)\big[\omega(\KK_{\vb}^t)+ut\big]
\end{align*}
    holds with probability at least $1- \exp(-u^2)$. The last inequality holds because $\diam(\KK_{\vb}^t) = \sup_{\vx,\vy\in \KK_{\vb}^t}\|\vx-\vy\|_2\leq 2t$.
    Setting $u = \frac{s\sigma}{\psi+\mu}$ yields the desired results.
\end{proof}

Thus, combing Lemma \ref{ip_bound} and Lemma \ref{sub-G} yields the proof of Lemma \ref{lemma: UI}, namely, for any $0<s \leq \sqrt{m}$, the event
	\begin{align*}
	&\sup_{(\va,\vb)\in \KK^t} \ip{\mPhi\va + \sqrt{m}\vb}{\vz}\\
	&\qquad\leq \sup_{\va \in \KK_{\va}^t} \ip{\mPhi\va}{\vz} + \sqrt{m}\sup_{\vb\in \KK_{\vb}^t} {\ip{\vb}{\vz}}\\
	&\qquad\leq C_6\sqrt{m}\big[\omega(\KK_{\va}^t)\sigma+st\sigma + (\psi+\mu)\omega(\KK_{\vb}^t)+st\sigma\big]\\
	&\qquad\leq C_7\sqrt{m}\big[\omega(\KK^t)(\sigma+\psi+\mu)+st\sigma\big]
\end{align*}
	holds with probability at least
	\begin{align*}
	&1-2\exp(-{cs^2\sigma^4}/{(\psi+\mu)^4})-\exp(-s^2\sigma^2/(\psi+\mu)^2)\\
	&\ge 1-2\exp(-{c's^2\sigma^4}/{(\psi+\mu)^4}).
	\end{align*}
	In the last inequality we have used the facts that $\omega(\KK_{\va}^t) \leq \omega(\KK^t)$ and $\omega(\KK_{\vb}^t) \leq \omega(\KK^t)$.
%	\begin{align*}
%	&\omega(\KK_{\va}^t)+\omega(\KK_{\vb}^t)\\
%    & =\E\sup_{\va \in\KK_{\va}^t\cap t\B_2^n}\ip{\vg}{\va}+\E\sup_{\vb \in\KK_{\vb}^t\cap t\B_2^m}\ip{\vh}{\vb}\\
%    & = \sqrt{2}\E\sup_{\vu\in\frac{1}{\sqrt 2}\KK_{\va}^t\cap \frac{t}{\sqrt 2}\B_2^n}\ip{\vg}{\vu}+\sqrt{2}\E\sup_{\vv\in\frac{1}{\sqrt 2}\KK_{\vb}^t\cap \frac{t}{\sqrt 2}\B_2^m}\ip{\vh}{\vv}\\
%    & =\sqrt{2}\E\sup_{\vu\in\frac{1}{\sqrt 2}\KK_{\va}^t\cap \frac{t}{\sqrt 2}\B_2^n, \vv\in\frac{1}{\sqrt 2}\KK_{\vb}^t\cap \frac{t}{\sqrt 2}\B_2^m}\ip{\vg}{\vu}+\ip{\vh}{\vv}\\
%	& \leq \sqrt{2}\E\sup_{(\vu,\vv)\in\frac{1}{\sqrt 2}\KK\cap t\B_2^{n+m}}\ip{\vg}{\vu}+\ip{\vh}{\vv}\\
%	&\leq \sqrt{2}\E\sup_{(\vu,\vv)\in\KK\cap t\B_2^{n+m}}\ip{\vg}{\vu}+\ip{\vh}{\vv}\\
%	&=\sqrt{2}\omega(\KK),
%	\end{align*}
%    where $\vg$ and $\vh$ are independent standard Gaussian vectors. The last inequality holds because $\KK$ is star shaped. Thus we complete the proof.

\section{Proof of Lemma \ref{ip_bound}}\label{proof of ip_bound}
The proof of Lemma \ref{ip_bound} is inspired by \cite{Plan2017High}. For clarity, the proof is divided into the following three steps.
	 %First, we introduce the orthogonal decomposition of Gaussian variables. We then decomposition the inner product into two parts using Gaussian decomposition. Finally, we bound the two parts respectively.\\
	
\textbf{Step 1: Problem Reduction.}
Since $\vz_i$ are not independent of $\mPhi_i$, to facilitate the analysis, we need to ``decouple'' them as much as possible. To this end, we consider the orthogonal decomposition of the vectors $\mPhi_i$ along the direction of $\vx^{\star}$ and its orthogonal complementary space. More precisely, we express
\begin{equation}\label{eq: orthogonal decomposition}
	\mPhi_i = \mP\mPhi_i+\mP^{\perp}\mPhi_i=\ip{\mPhi_i}{\vx^{\star}} \vx^{\star}+ \mP^{\perp}\mPhi_i, %\ip{\mPhi_i}{\vx^{\star}} \vx^{\star} + \mPhi_i^\perp.
\end{equation}
where $\mP :=\vx^{\star}{\vx^{\star}}^{\perp}$ and $\mP^{\perp} :=\mI_n-\mP$.	
Thus we have
\begin{align*}
	&\sup_{\va \in \KK_{\va}^t} \ip{\mPhi\va}{\vz} = \sup_{\va \in \KK_{\va}^t} \sum_{i=1}^{m} \vz_i\ip{\mPhi_i}{\va}\\
	&= \sup_{\va \in \KK_{\va}^t} \sum_{i=1}^{m} \vz_i\ip{\ip{\mPhi_i}{\vx^{\star}} \vx^{\star} + \mP^\perp\mPhi_i}{\va}\\
	&\leq \sup_{\va \in \KK_{\va}^t} \sum_{i=1}^{m} \ip{\vz_i\ip{\mPhi_i}{\vx^{\star}} \vx^{\star}}{\va} + \sup_{\va \in \KK_{\va}^t} \sum_{i=1}^{m} \ip{\vz_i \mP^\perp\mPhi_i}{\va}\\
	&\leq \left|\sum_{i=1}^{m} \vz_i\ip{\mPhi_i}{\vx^{\star}}\right|\sup_{\va \in \KK_{\va}^t} \left|\ip{\vx^{\star}}{\va}\right| + \sup_{\va \in \KK_{\va}^t} \sum_{i=1}^{m} \ip{\vz_i \mP^\perp\mPhi_i}{\va}\\
	&\leq \left|\sum_{i=1}^{m} \vz_i\ip{\mPhi_i}{\vx^{\star}}\right|\cdot t + \sup_{\va \in \KK_{\va}^t} \sum_{i=1}^{m} \ip{\vz_i \mP^\perp\mPhi_i}{\va}\\
	&:= E_1+E_2.
\end{align*}

\textbf{Step 2: Bound $E_1$.}
Define $\xi_i := \vz_i\ip{\mPhi_i}{\vx^{\star}} =  \big[ f(\ip{\mPhi_i}{\vx^{\star}}) - \mu \ip{\mPhi_i}{\vx^{\star}} \big] \ip{\mPhi_i}{\vx^{\star}}$. By the definition of $\mu$, it is not hard to check that $\E\xi_i=0$.
Note that $\vz_i$ have sub-Gaussian norm $K \leq C_2(\psi+\mu)$ (see \eqref{z_sub_G}) and $\ip{\mPhi_i}{x^{\star}} \sim \NN(0,1)$. It then follows from Fact \ref{Product of subgaus is subexp} that $\xi_i$ are i.i.d. centered sub-exponential variables with $\|\xi_i\|_{\psi_1}=C'K$. Let $\epsilon=s/\sqrt m \leq 1$. A Bernstein-type inequality (Fact \ref{Bernstein ineq}) implies that
\begin{align*} \label{eq: E1 whp}
	\Big| \frac{1}{m} \sum_{i=1}^m \xi_i \Big| \leq \epsilon \sigma, \quad \text{and thus} \quad E_1 \leq m\epsilon\sigma t
\end{align*}
holds with probability at least
\begin{align*}
	1 - 2 \exp \Big[ -c \min \Big( \frac{\epsilon^2 \sigma^2}{K^2}, \, \frac{\epsilon \sigma}{K} \Big) m \Big]
	\ge 1 - 2 \exp \Big( -\frac{c m \epsilon^2 \sigma^2}{K^2} \Big).
\end{align*}
In the last inequality we have used the facts that $\sigma^2 = \E \vz_i^2 \leq C K^2$ and $\epsilon \leq 1$.
	
\textbf{Step 3: Bound $E_2$.} Let $\vw = \sum_{i=1}^{m}\vz_i\mP^{\perp}\mPhi_i$. By the orthogonal decomposition \eqref{eq: orthogonal decomposition}, $\mP^{\perp}\mPhi_i$ and $\vz_i$ are independent \cite[Lemma 8.1]{Plan2017High}. Fixing $\vz_i$, a direct calculation shows that
	\begin{align*}
	\vw\sim k\cdot\NN(0,\mP^{\perp}),
	\end{align*}
	where $k= \sqrt{{\sum_{i=1}^{m} \vz_i^2}}$. Thus, conditioning on $\vz_i$, $E_2=\sup_{\va \in \KK_{\va}^t}\ip{\vw}{\va}=k\cdot\sup_{\va \in \KK_{\va}^t}\ip{\mP^{\perp}\vg}{\va}$.
	
	Note that $\vz_i^2$ are sub-exponential variables with mean $\sigma^2$ and $\psi_1$-norm $CK^2$. By Fact \ref{pro: center}, $\vz_i^2 - \sigma^2$ are centered sub-exponential variables with $\psi_1$-norm $C'K^2$. A similar application of Bernstein-type inequality (Fact \ref{Bernstein ineq}) yields that
	\begin{align*}
	\Big| \frac{1}{m} \sum_{i=1}^m (\vz_i^2-\sigma^2) \Big| \leq 3\sigma^2, \quad \text{and thus} \quad k^2 \leq 4m\sigma^2
	\end{align*}
	holds with probability at least
	\begin{align*}
	1 - 2 \exp \Big[ -c \min \Big( \frac{\sigma^4}{K^4}, \, \frac{\sigma^2}{K^2} \Big) m \Big]
	\ge 1 - 2 \exp \Big( -\frac{c m \sigma^4}{K^4} \Big).
	\end{align*}
	Here we have used the fact that $\sigma^2 = \E \vz_i^2 \leq C K^2$ again. Therefore, with probability at least $1 - 2 \exp ( -{c m \sigma^4}/{K^4} )$,
    $$E_2 \leq 2\sqrt{m}\sigma \cdot\sup_{\va \in \KK_{\va}^t}\ip{\mP^{\perp}\vg}{\va}.$$
	
	We next bound $\sup_{\va \in \KK_{\va}^t}\ip{\mP^{\perp}\vg}{\va}$ using Gaussian concentration. Since $\KK_{\va}^t\subset t\B_2^n$, the function $\vx \mapsto \sup_{\va \in\KK_{\va}^t}\ip{\mP^{\perp}\vx}{\va}$ has Lipschitz norm at most $t$. Indeed,
\begin{align*}
  \sup_{\va \in\KK_{\va}^t}\ip{\mP^{\perp}\vx}{\va} - \sup_{\va \in\KK_{\va}^t}\ip{\mP^{\perp}\vy}{\va} & \leq \ip{\mP^{\perp}\vx}{\tilde{\va}} - \ip{\mP^{\perp}\vy}{\tilde{\va}} \\
     & \leq  \|\tilde{\va}\|_2 \cdot \|\vx - \vy\|_2\\
     & \leq  t \cdot  \|\vx - \vy\|_2,
\end{align*}
where we choose $\tilde{\va}$ such that $\sup_{\va \in\KK_{\va}^t}\ip{\mP^{\perp}\vx}{\va} =  \ip{\mP^{\perp}\vx}{\tilde{\va}} $.

Therefore, Gaussian concentration inequality (Fact \ref{gaussian concentration}) implies that
	\begin{align*}
	\sup_{\va \in \KK_{\va}^t}\ip{\mP^{\perp}\vg}{\va} & \leq \E\sup_{\va \in \KK_{\va}^t}\ip{\mP^{\perp}\vg}{\va} +t\epsilon\sqrt{m}\\
                                                       &\leq \omega(\KK_{\va}^t)+t\epsilon\sqrt{m}
	\end{align*}
	holds with probability at least $1 - \exp(-c\epsilon^2m)$. The second inequality holds because
	\begin{align*}
	\E \sup_{\va \in \KK_{\va}^t}\ip{\mP^{\perp}\vg}{\va}& =\E \sup_{\va \in \KK_{\va}^t}\ip{\mP^{\perp}{\vg}+\E \mP{\vg}}{\va}\\
	&\leq \E \sup_{\va \in \KK_{\va}^t}\ip{\mP^{\perp}{\vg}+\mP{\vg}}{\va}\\
	&= \E \sup_{\va \in \KK_{\va}^t}\ip{{\vg}}{\va}=\omega(\KK_{\va}^t),
	\end{align*}
	where the inequality follows from the independence of $\mP{\vg}$ and $\mP^{\perp}{\vg}$ and  Jensen's inequality.

	Taking union bound yields, with probability at least $1 - 2 \exp \Big( -\frac{c m \sigma^4}{K^4} \Big)-\exp(-c\epsilon^2m)$,
	\begin{align*}
	E_2\leq 2\sqrt{m}\sigma\big[\omega(\KK_{\va}^t)+t\epsilon\sqrt{m}\big].
	\end{align*}
	
	Putting everything together, we conclude that, for any $0<s \leq \sqrt{m}$ (noting that $\epsilon=s/\sqrt{m}$),
	\begin{align*}
	\sup_{\va \in \KK_{\va}^t} \ip{\mPhi\va}{\vz}
	&\leq E_1+E_2\\
	&\leq m\epsilon\sigma t + 2\sqrt{m}\sigma\big[\omega(\KK_{\va}^t)+t\epsilon\sqrt{m}\big]\\
	&=\sqrt{m}\big[2\omega(\KK_{\va}^t)\sigma+2st\sigma +st\sigma\big]\\
	&\leq 3\sqrt{m}\big[\omega(\KK_{\va}^t)\sigma+st\sigma\big]
	\end{align*}
	holds with probability at least
	\begin{align*}
	&1-2 \exp \Big( -\frac{c m \epsilon^2 \sigma^2}{K^2} \Big)-2 \exp \Big( -\frac{c m \sigma^4}{K^4} \Big)-\exp(-c\epsilon^2m)\\
	&\ge 1-2 \exp \Big( -\frac{c'm\epsilon^2 \sigma^4}{K^4} \Big)=1-2 \exp \Big( -\frac{c's^2 \sigma^4}{K^4} \Big).
	\end{align*}
	Here we have used again that $\sigma^2\leq CK^2$ and $\epsilon \leq 1$. Thus we complete the proof.

\bibliographystyle{IEEEtran}
\bibliography{IEEEabrv,myref}

\end{document}